\documentclass[11pt]{article}
\usepackage{graphicx}
\usepackage{amsfonts}
\usepackage{amsmath}
\usepackage{amsthm}
\usepackage{bm}                     
\usepackage{bbm}
\usepackage{color}
\usepackage{longtable}
\usepackage{bbm}
\usepackage[all]{xy}

\usepackage[paper=a4paper,dvips,top=3cm,left=2.4cm,right=2.4cm,
    foot=1cm,bottom=3cm]{geometry}

\begin{document}

\title{A Polynomially Irreducible Functional Basis of Hemitropic Invariants of Piezoelectric Tensors}
\author{Yannan Chen\footnote{%
    Department of Applied Mathematics, The Hong Kong Polytechnic University,
    Hung Hom, Kowloon, Hong Kong ({\tt yannan.chen@polyu.edu.hk}).
    This author was supported by the National Natural Science Foundation of China (Grant No. 11571178, 11771405).}
\and Zhenyu Ming\footnote{%
    Department of Mathematical Sciences, Tsinghua University, Beijing 100084, China. ({\tt mingzy17@mails.tsinghua.edu.cn}).
    This author was supported by the National Natural Science Foundation of China (Grant No. 11271221, 11771244).}
\and Liqun Qi\footnote{%
    Department of Applied Mathematics, The Hong Kong Polytechnic University,
    Hung Hom, Kowloon, Hong Kong ({\tt maqilq@polyu.edu.hk}).
    This author's work was partially supported by the Hong Kong Research Grant Council
    (Grant No. PolyU  15302114, 15300715, 15301716 and 15300717).}
\and Wennan Zou\footnote{%
    Institute for Advanced Study, Nanchang University, Nanchang 330031, China ({\tt zouwn@ncu.edu.cn}).
    This author was supported by the National Natural Science Foundation of China (Grant No. 11372124).}
}

\date{\today}
\maketitle

\begin{abstract}
  For piezoelectric tensors, Olive (2014) proposed a minimal integrity basis of 495 hemitropic invariants, which is also a functional basis. In this article, we construct a new functional basis of hemitropic invariants of piezoelectric tensors, using the approach of Smith and Zheng. By eliminating invariants that are polynomials in other invariants, we obtain a new functional basis with 260 polynomially irreducible hemitropic invariants. Thus, the number of hemitropic invariants in the new functional basis is substantially smaller than the number of invariants in a minimal integrity basis.

  \medskip

  \textbf{Key words.} functional basis, hemitropic invariant, piezoelectric tensor.
\end{abstract}

\newtheorem{Theorem}{Theorem}[section]
\newtheorem{Definition}[Theorem]{Definition}
\newtheorem{Lemma}[Theorem]{Lemma}
\newtheorem{Corollary}[Theorem]{Corollary}
\newtheorem{Proposition}[Theorem]{Proposition}
\newtheorem{Conjecture}[Theorem]{Conjecture}
\newtheorem{Question}[Theorem]{Question}

\newcommand{\REAL}{\mathbb{R}}
\newcommand{\COMP}{\mathbb{C}}
\newcommand{\vt}[1]{{\bf #1}}
\newcommand{\aaa}{{\vt{a}}}
\newcommand{\ddd}{{\vt{d}}}
\newcommand{\cc}{{\vt{c}}}
\newcommand{\uu}{{\vt{u}}}
\newcommand{\vv}{{\vt{v}}}
\newcommand{\ww}{{\vt{w}}}
\newcommand{\x}{{\vt{x}}}
\newcommand{\y}{{\vt{y}}}
\newcommand{\z}{{\vt{z}}}
\newcommand{\e}{{\vt{e}}}
\newcommand{\A}{{\bf A}}
\newcommand{\B}{{\bf B}}
\newcommand{\Dd}{{\bf D}}
\newcommand{\D}{{\bf D}}
\newcommand{\E}{{\bf E}}
\newcommand{\F}{{\bf F}}
\newcommand{\G}{{\bf G}}
\renewcommand{\H}{{\bf H}}
\newcommand{\K}{{\bf K}}
\newcommand{\U}{{\bf U}}
\newcommand{\V}{{\bf V}}
\newcommand{\W}{{\bf W}}
\newcommand{\Pie}{{\bf P}}
\newcommand{\SPie}{{\mathbb{P}\mathrm{iez}}}
\newcommand{\SH}[1]{{\mathbb{H}^{#1}}}
\newcommand{\OO}{{\mathrm{O}(3)}}
\newcommand{\SO}{{\mathrm{SO}(3)}}
\newcommand{\LCT}{{\bm \epsilon}}
\newcommand{\COV}{\mathbf{Cov}}
\newcommand{\INV}{\mathbf{Inv}}
\newcommand{\BS}[1]{{\mathcal{S}_{#1}}}
\newcommand{\uim}{\mathbf{\Large i}}
\renewcommand{\epsilon}{\varepsilon}
\newcommand{\tabincell}[2]{\begin{tabular}{@{}#1@{}}#2\end{tabular}}
\newcommand{\Epsilon}{{\bm\epsilon}}
\newcommand{\tr}{\mathrm{tr}}
\newcommand{\Es}{{\bf H}}
\newcommand{\HH}{{\bf H}}
\newcommand{\ee}{{\bf\Epsilon}}

\section{Introduction}

In the early 1880s, Curie brothers \cite{CC-80,CC-81} discovered the piezoelectricity in certain crystalline materials with no centrosymmetry, which describes a conversion from mechanical energy into electricity or vice-versa \cite{CJQ-17,Ha-07,ZTP-13}. The piezoelectric tensor arises form a linear electromechanical interaction and hence is a third order tensor with the last two indices symmetric in a three-dimensional physical space.
Piezoelectric tensor is one of the most important tensors which has extensive applications in physics and engineering. For instance, it has been widely used in  crystal study \cite{Ha-07,Kh-08,Lo-89,Ny-85,ZTP-13} and also been applied to production and detection of sound, generation of high voltages,
electronic frequency generation, microbalances, and ultra fine focusing of optical assemblies \cite{Kh-08}.


The theory of representations for tensor functions is also of prime importance in the rational of material behaviors \cite{Boe-87,Zh-94,TM-06}. It was introduced to describe general consistent invariant forms of the nonlinear constitutive equations and to determine the number and the type of scalar variables involved.  In the latter half of the twentieth century, representations in complete and irreducible forms of vectors, second order symmetric tensors and second order skew-symmetric tensors for both isotropic and hemitropic invariants in two- and three-dimensional physical spaces, were well established by Spencer \cite{Spe-70}, Wang \cite{W-701,W-702,W-7071}, Smith \cite{Sm-71}, Boehler \cite{Boe-77}, Pennisi and Trovato \cite{PT-87} and simplified by Zheng \cite{Zh-94}. In recent years, a series of breakthroughs for third  and fourth order tensors have been achieved in this field \cite{OA-14,Ol-17,OKA-17,CHQZ-18,CLQZZ-18,LDQZ-18}.

The piezoelectric tensor contains eighteen independent elements in a three-dimensional physical space, since the last two indices are symmetric. As a special case of piezoelectric tensors, the third order symmetric and traceless tensor has seven independent elements. Smith and Bao \cite{SB-97} gave a minimal integrity basis of 4 isotropic invariants for third order symmetric and traceless tensors. Chen, Hu, Qi and Zou \cite{CHQZ-18} proved that the Smith--Bao minimal integrity basis is also a minimal functional basis of isotropic invariants of third order symmetric and traceless tensors. The third order symmetric tensor is another spacial piezoelectric tensor and it has ten independent elements. By the recent work of Olive and Auffray \cite{OA-14}, a minimal integrity basis of third order symmetric tensors contains 13 isotropic invariants. Remarkably, Chen, Liu, Qi, Zheng and Zou \cite{CLQZZ-18} claimed that eleven out of thirteen isotropic invariants in the Olive--Auffray basis form a minimal functional basis of third order symmetric tensors. Liu, Ding, Qi and Zou \cite{LDQZ-18} proposed a set of 10 isotropic invariants which forms a minimal integrity basis as well as a minimal functional basis of third order Hall tensors. In addition, Olive, Kolev and Auffray \cite{OKA-17} presented a minimal integrity of 297 isotropic invariants for fourth order elasticity tensors.

In 2014, Olive \cite{Ol-14} presented a minimal integrity basis of hemitropic invariants of the piezoelectric tensor, which consists of $495$ hemitropic invariants.   Is it possible to find a functional basis of hemitropic invariants of the piezoelectric tensor, which consists of polynomial invariants, such that the number of hemitropic invariants in that basis is substantially smaller than $495$?   We will give a positive answer to this question in this paper.

In this article, to obtain a functional basis of piezoelectric tensors, we apply a constructive method which was developed by Smith \cite{Sm-71} and Zheng \cite{Zh-93b}. For a group of second order symmetric tensors, second order skew-symmetric tensors, and vectors, Smith \cite{Sm-71} constructed a set of invariants of the tensor group, such that all tensors in the group could be determined from these invariants under a certain orthonormal basis. Zheng \cite{Zh-93,Zh-93b} refined this method further.  The constructive method provides several intermediate tensors with order one and two. Generally speaking, for recovering a higher order piezoelectric tensor, it is better to start from intermediate tensors than to begin with only zero order tensors. This is the motivation of our paper.

By the orthogonal irreducible decomposition \cite{ZZ-01} of tensors, the piezoelectric tensor is factorized into four parts: a third order symmetric and traceless tensor, a second order symmetric and traceless tensor, and two vectors. Since functional bases of second order tensors and vectors are well-studied, the third order symmetric and traceless tensor is the only undetermined tensor. By exploring contraction of indices of different tensors, we construct nine intermediate tensors: five second order symmetric tensors and four vectors. Using the approach of Smith \cite{Sm-71} and Zheng \cite{Zh-93b}, we obtain a functional basis of 393 hemitropic invariants of these nine intermediate tensors. Next, starting from these nine intermediate tensors, we recover seven independent elements of the third order symmetric and traceless tensor under a proper orthonormal basis. This means that the functional basis of 393 hemitropic invariants of nine intermediate tensors is also a functional basis of hemitropic invariants of the piezoelectric tensor.
Finally, by eliminating hemitropic invariants that are polynomials of other invariants in the functional basis, we obtain a polynomially irreducible functional basis of piezoelectric tensors which contains 260 hemitropic invariants.


This paper is organized as follows. In Section \ref{Sec:decom}, we introduce some basic definitions of tensor spaces and the orthogonal irreducible decomposition of piezoelectric tensors. In Section \ref{Sec:Func1}, preliminary definitions of invariants and functional bases of second order tensors and vectors are presented. In Section \ref{Rec}, starting from a set of nine intermediate tensors related to a piezoelectric tensor, we prove that the piezoelectric tensor is determined by these intermediate tensors. For this reason, a functional basis of the piezoelectric tensor is equivalent to a functional basis of these intermediate tensors. In Section \ref{Sec:Func2}, we present a polynomially irreducible functional basis of 260 hemitropic invariants of piezoelectric tensors as a final result. Moreover, some special cases are considered to partially verify the correctness of our work.
Finally, some concluding remarks are addressed in Section \ref{Sec:FinRemk}.

\section{Preliminary}

In this section, we introduce some basic definitions and related results on the theory of representations for tensor functions.

\subsection{Decomposition of a piezoelectric tensor}\label{Sec:decom}

We denote $\SPie$ as the linear space of piezoelectric tensors with order three in a three-dimensional physical space.
Clearly, a piezoelectric tensor contains $18$ independent elements:
\begin{equation*}
  \begin{array}{cccccc}
    P_{111}, & P_{122}, & P_{133}, & P_{123}, & P_{113}, & P_{112}, \\
    P_{211}, & P_{222}, & P_{233}, & P_{223}, & P_{213}, & P_{212}, \\
    P_{311}, & P_{322}, & P_{333}, & P_{323}, & P_{313}, & P_{312}.
   \end{array}
\end{equation*}

Let $\mathbb{T}_{ijk}$ and $\mathbb{T}_{(ijk)}$ be the real linear space of third order tensors and the real linear space of third order symmetric tensors in a three-dimensional physical space, respectively. Here, the notation $(..)$ means invariance under all permutations of indices in parentheses. In this sense, we may denote $\SPie=\mathbb{T}_{i(jk)}$.


Given a positive oriented orthonormal basis $\{\e_1,\e_2,\e_3\}$ of the three-dimensional physical space, a tensor ${\bf T}\in\mathbb{T}_{ijk}$ could be represented as
\begin{equation*}
  {\bf T} = T_{ijk} \cdot \e_i\otimes\e_j\otimes\e_k,
\end{equation*}
where $T_{ijk}$ is a three-way array and $\otimes$ stands for the tensor product. We call $T_{ijk}$ the representing array of the tensor ${\bf T}$ and denote ${\bf T}=(T_{ijk})$ for a given coordinate system.

Let $\SO$ be the rotation group in dimension three. Under an orthonormal basis, every rotation is described by an orthogonal 3-by-3 matrix $g$ with $\det(g)=1$.
An $\SO$-action on $\mathbb{T}_{ijk}$ is denoted by $\ast$ and defined by
\begin{equation*}
  \ast : \SO\times\mathbb{T}_{ijk} \to \mathbb{T}_{ijk}; \qquad
  (g,{\bf T})\mapsto {\bf T}':=g\ast{\bf T} \text{ with }T'_{ijk}=g_{ir}g_{js}g_{kt}T_{rst}.
\end{equation*}
A subspace $\mathbb{F}\subseteq \mathbb{T}_{ijk}$ is $\SO$-stable if for all ${\bf T}\in\mathbb{F}$, it holds that
\begin{equation*}
  g\ast {\bf T}\in\mathbb{F} ~~ \forall g\in\SO.
\end{equation*}
Generally, an $\SO$-stable space may be decomposed into smaller $\SO$-stable subspaces. If a space contains no proper non-trivial $\SO$-stable subspace, we call it irreducible under the $\SO$-action.

Let $\SH{n}$ be the space of $n$th order symmetric and traceless tensors. Here, traceless means that
\begin{equation*}
  T_{iij}=T_{iji}=T_{jii}=0, ~~\forall j
\end{equation*}
provided $(T_{ijk})\in\mathbb{T}_{ijk}$. Since there is a classical isomorphism in the three-dimensional physical space between $\SH{n}$ and $n$th-degree harmonic homogeneous polynomials, a symmetric and traceless tensor is also called a harmonic tensor. All scalars and vectors are naturally harmonic. It is a classical statement that $\SH{n}$ is irreducible under $\SO$-actions \cite{OA-14}. Since $\SPie\supset\mathbb{T}_{(ijk)}\supset\SH{3}$, $\SPie$ is not irreducible.

Now, we factorize the space of piezoelectric tensors $\SPie$ into four irreducible subspaces \cite{ZZ-01}:
\begin{equation}\label{decom-piez-space}
  \SPie ~~\to~~ \SH{3}\oplus\SH{1}\oplus\SH{2}\oplus\SH{1}.
\end{equation}
Using the approach given in \cite{Spe-70}, we split a piezoelectric tensor $\Pie$ into four parts $(\A,\uu,\D,\vv)$.
The process is illustrated as follows
\begin{equation*}
  \xymatrix{
  &&& \Pie\in\SPie \ar[dll]_{\text{symmetry}}\ar[drr]^{\text{no trace}} &&&                          \\
  & \mathbf{S} \ar[dl]_{\text{no trace}} \ar[dr]^{} &&&& \mathbf{N} \ar[dl]_{\text{symmetry}} \ar[dr]^{} & \\
 \A\in\SH{3} && \uu\in\SH{1} && \D\in\SH{2} && \vv\in\SH{1}                  }
\end{equation*}
where $\mathbf{S}$ is a third order symmetric tensor with traces, $\mathbf{N}$ is a second order traceless tensor but it is asymmetric.

Let us see more details. We denote $\epsilon_{ijk}$ the Levi-Civita symbol and $\delta_{ij}$ the Kronecker delta:
\begin{equation*}
  \epsilon_{ijk}=\left\{\begin{array}{ll}
    1  & \text{ if }(i,j,k)\in\{(1,2,3),(2,3,1),(3,1,2)\}, \\
    -1 & \text{ if }(i,j,k)\in\{(1,3,2),(2,1,3),(3,2,1)\}, \\
    0  & \text{ otherwise, }
  \end{array}\right.
  \qquad \text{ and } \qquad
  \delta_{ij}=\left\{\begin{array}{ll}
    1  & \text{ if }i=j, \\
    0  & \text{ otherwise. }
  \end{array}\right.
\end{equation*}
For a given piezoelectric tensor $\Pie=(P_{ijk})$, we first compute a second order traceless tensor $\mathbf{N}=(N_{ij})$ and a third order symmetric tensor $\mathbf{S}=(S_{ijk})$ by
\begin{equation*}
  N_{ij}=\epsilon_{k\ell j}P_{\ell ki} \qquad\text{ and }\qquad
  S_{ijk}=P_{ijk}-\frac{1}{3}(\epsilon_{ji\ell}N_{k\ell}+\epsilon_{ki\ell}N_{j\ell}),
\end{equation*}
respectively. Second, from the second order traceless tensor $\mathbf{N}=(N_{ij})$, we calculate a vector $\vv=(v_k)$ and a second order symmetric and traceless tensor $\D=(D_{ij})$, where
\begin{equation*}
  v_k=\epsilon_{ijk}N_{ij} \qquad\text{ and }\qquad
  D_{ij}=N_{ij}-\frac{1}{2}\epsilon_{ijk}v_k.
\end{equation*}
Finally, by the harmonic decomposition of the third order symmetric tensor $\mathbf{S}=(S_{ijk})$, we obtain a vector $\uu=(u_k)$ and a third order symmetric and traceless tensor $\A=(A_{ijk})$ via
\begin{equation*}
  u_k=S_{iik} \qquad\text{ and }\qquad
  A_{ijk}=S_{ijk}-\frac{1}{5}(u_i\delta_{jk}+u_j\delta_{ik}+u_k\delta_{ij}).
\end{equation*}

Next, we address existing results on hemitropic invariants of lower order tensors.

\subsection{A functional basis of hemitropic invariants of second-order symmetric tensors and vectors}\label{Sec:Func1}

Before we start, we give some preliminary definitions.
If for all ${\bf T}\in\mathbb{T}_{ijk}$, a scalar-valued function $I(\cdot)$ satisfies
\begin{equation*}
  I({\bf T}) = I(g\ast{\bf T}) ~~\forall g\in\SO,
\end{equation*}
we call $I$ a hemitropic invariant of $\mathbb{T}_{ijk}$.
When we restrict scalar-valued functions in homogeneous polynomials, the algebra of invariant polynomials on $\mathbb{T}_{ijk}$ is finitely generated, according to invariant theory \cite{Hi-93}.

\begin{Definition}
  Let $\{I_1,I_2,\dots,I_r\}$ be a finite set of hemitropic invariants of $\mathbb{T}_{ijk}$ that are all homogeneous polynomials. If all hemitropic invariant polynomials of $\mathbb{T}_{ijk}$ are polynomials in $I_1,I_2,\dots,I_r$, we call the set $\{I_1,I_2,\dots,I_r\}$ an integrity basis of $\mathbb{T}_{ijk}$. An integrity basis is minimal if none of its proper subset is an integrity basis.
\end{Definition}

If we relax invariants from polynomials to scalar-valued functions, we get the functional basis.

\begin{Definition}
  A finite set of hemitropic invariants $\{I_1,I_2,\dots,I_r\}$ of $\mathbb{T}_{ijk}$ is called a functional basis of $\mathbb{T}_{ijk}$ if
  \begin{equation*}
    I_i({\bf T}_1) = I_i({\bf T}_2) ~~\forall i=1,\dots,r
  \end{equation*}
  imply ${\bf T}_1 = g\ast {\bf T}_2$ for some $g\in\SO$. A functional basis is minimal if none of its proper subset is a functional basis.
\end{Definition}

For a given tensor ${\bf T}$, a set of tensors
\begin{equation*}
  \SO\ast{\bf T} = \{g\ast{\bf T}:g\in\SO\}
\end{equation*}
is called the $\SO$-orbit of ${\bf T}$. Clearly, the functional basis has the property of separating $\SO$-orbits \cite{OKA-17}.
Since integrity bases are also functional bases \cite{BKO-94}, both integrity bases and functional bases could separate orbits.
In a geometric viewpoint, a piezoelectric material is a point in the orbit space $\SPie/\SO$.

Once a typical tensor ${\bf T}_1$ in the $\SO$-orbit $(\SO\ast{\bf T}_1)$ is determined by the set of hemitropic invariants $\{I_1,I_2,\dots,I_r\}$, we compute any hemitropic invariant of tensor in $(\SO\ast{\bf T}_1)$ from ${\bf T}_1$ directly, which is clearly a scalar-valued function in $I_1,I_2,\dots,I_r$. Thus, the set $\{I_1,I_2,\dots,I_r\}$ is named a functional basis of a tensor space.

Smith \cite{Sm-71} proposed a constructive approach for determining a functional basis of second-order symmetric tensors $\A_1,\dots,\A_M$, second order skew-symmetric tensors $\W_1,\dots,\W_N$, and first order vectors $\vv_1,\dots,\vv_P$ in a common coordinate system. Whereafter, Boehler \cite{Boe-77} refined Smith's functional basis. Pennisi and Trovato \cite{PT-87} proved that the refined functional basis is minimal. A summarize work was due to Zheng \cite{Zh-94}.

Let $\{\e_1,\e_2,\e_3\}$ be a positive oriented orthonormal basis. The Levi-Civita tensor $\Epsilon=\epsilon_{ijk}\e_i\otimes\e_j\otimes\e_k$ is a constant tensor under the $\SO$-action. Hence, there is a one-to-one correspondence between skew-symmetric tensors $\W=(W_{ij})$ and its axial vectors $\vv=(v_i)$ \cite{Zh-93b}:
\begin{equation*}
  \W = - \Epsilon\vv \qquad\text{ and }\qquad \vv=-\frac{1}{2}\Epsilon[\W],
\end{equation*}
where $\Epsilon\vv:=\epsilon_{ijk}v_k \e_i\otimes\e_j$ and $\Epsilon[\W]:=\epsilon_{ijk}W_{jk}\e_i$. Here, we only consider the functional basis of second-order symmetric tensors $\A_1,\dots,\A_M$ and vectors $\vv_1,\dots,\vv_P$, which contains the following hemitropic invariants \cite{Zh-93b,Zh-94}:
\begin{equation}\label{SmithBasis}
\left\{\begin{aligned}
  & \vv_{\alpha}\cdot\vv_{\alpha},~~ \vv_{\alpha}\cdot\vv_{\beta},~~ [\vv_{\alpha},\vv_{\beta},\vv_{\gamma}], \\
  & \mathrm{tr} \A_{\mu},~~ \mathrm{tr} \A_{\mu}^2,~~ \mathrm{tr} \A_{\mu}^3,~~
    \mathrm{tr} \A_{\mu}\A_{\nu},~~ \mathrm{tr} \A_{\mu}^2\A_{\nu},~~ \mathrm{tr} \A_{\mu}\A_{\nu}^2,~~ \mathrm{tr} \A_{\mu}^2\A_{\nu}^2,~~ \mathrm{tr} \A_{\mu}\A_{\nu}\A_{\sigma}, \\
  & \vv_{\alpha}\cdot \A_{\mu}\vv_{\alpha},~~ \vv_{\alpha}\cdot \A_{\mu}^2\vv_{\alpha},~~ [\vv_{\alpha},\A_{\mu}\vv_{\alpha},\A_{\mu}^2\vv_{\alpha}], \\
  & \vv_{\alpha}\cdot \Epsilon[\A_{\mu}\A_{\nu}],~~ \vv_{\alpha}\cdot \Epsilon[\A_{\mu}^2\A_{\nu}],~~ \vv_{\alpha}\cdot \Epsilon[\A_{\mu}\A_{\nu}^2],~~
    [\vv_{\alpha},\A_{\mu}\vv_{\alpha},\A_{\nu}\vv_{\alpha}], \\
  & \vv_{\alpha}\cdot \A_{\mu}\vv_{\beta},~~ [\vv_{\alpha},\vv_{\beta},\A_{\mu}\vv_{\alpha}],~~ [\vv_{\alpha},\vv_{\beta},\A_{\mu}\vv_{\beta}],
\end{aligned}\right.
\end{equation}
where $\alpha,\beta,\gamma\in\{1,2,\dots,P\}$ with $\alpha<\beta<\gamma$, $\mu,\nu,\sigma\in\{1,2,\dots,M\}$ with $\mu<\nu<\sigma$, and $[\uu,\vv,{\bf w}]=\vv\cdot(\Epsilon\uu){\bf w}$ is the scalar triple product.

\section{Recovery of a piezoelectric tensor}\label{Rec}


According to the decomposition of piezoelectric tensors \eqref{decom-piez-space}, we know
\begin{equation}\label{piezo-decom}
  P_{ijk} = A_{ijk}
              + \frac{1}{3}(\epsilon_{i\ell k}D_{\ell j}+\epsilon_{i\ell j}D_{\ell k})
              + \frac{1}{5}(\delta_{ij}u_k+\delta_{ik}u_j+\delta_{jk}u_i)
              + \frac{1}{6}(\epsilon_{ij\ell}\epsilon_{\ell km}v_m+\epsilon_{i\ell k}\epsilon_{\ell mj}v_m),
\end{equation}
where $\A=(A_{ijk})\in\SH{3}, \D=(D_{ij})\in\SH{2},$ and $\uu=(u_i),\vv=(v_i)\in\SH{1}$.
For convenience, we define some tensors:
\begin{eqnarray*}
  & \B:=A_{ik\ell}A_{jk\ell}\e_i\otimes\e_j, \quad \cc:=A_{ijk}B_{jk}\e_i, \quad
    \F:=A_{ijk}u_k\e_i\otimes\e_j, \quad \G:=A_{ijk}v_k\e_i\otimes\e_j, & \\
  & \E:=A_{ik\ell}\epsilon_{jm\ell}D_{km}\e_i\otimes\e_j, \quad
    \ww:=-\tfrac{1}{2}\Epsilon[\E], \qquad \H:=\E+\Epsilon\ww,
    &
\end{eqnarray*}
where $\B=(B_{ij})$ is a second order symmetric tensor, $\F=(F_{ij})$ and $\G=(G_{ij})$ are second order symmetric and traceless tensors, $\E=(E_{ij})$ is a second order traceless and asymmetric tensor, which is a sum of a symmetric and traceless tensor $\H$ and a skew-symmetric tensor $(-\Epsilon\ww)$, and $\cc=(c_i)$ and $\ww$ are vectors. Clearly, $\E$ is equivalent to $\H$ and $\ww$.

The outline of the process for constructing a functional basis of piezoelectric tensors is as follows.
Using the approach of Smith \cite{Sm-71} and Zheng \cite{Zh-93b}, we estimate a group of nine intermediate tensors
\begin{equation}\label{med-tens}
  \cc, \quad \uu, \quad \vv, \quad \ww, \quad
  \B, \quad \D, \quad \F, \quad \G,~~~ \text{and } \H
%
\end{equation}
from a set of hemitropic invariants (which will be addressed in Section \ref{Sec:Func2}) under a certain positive oriented orthonormal basis.
In a certain $\SO$-orbit, there are infinitely many group of tensors which are equivalent, so we only need to determine any one of them, i.e., a typical group of tensors is servable for identifying the $\SO$-orbit.
The key point is to deal with a group of tensors under a common positive oriented orthonormal basis.
We may further rotate tensors in the group \eqref{med-tens} simultaneously to recover the third order symmetric and traceless tensor $\A$.
Once $\D,\uu,\vv$, and $\A$ are determined in a common positive oriented orthonormal basis, the piezoelectric tensor $\Pie$ is computed from \eqref{piezo-decom} straightforwardly and hence the set of hemitropic invariants is a functional basis of piezoelectric tensors.

In the remainder of this section, we focus on the recovery of the only undetermined tensor $\A=(A_{ijk})$ provided that tensors in the group \eqref{med-tens} are known under a proper positive oriented orthonormal basis.



Smith's method is our fundamental tool for recovering a piezoelectric tensor from a set of hemitropic invariants, i.e., its functional basis.
A valuable tool in Smith's method is the composition of rotations such that multiple tensors have better structure. For example, we consider two nonzero vectors $\uu$ and $\vv$ that are not collinear. At the first step, we may rotate the coordinate system such that the direction of 1-axis is along with the vector $\uu$. Hence, we have $\uu=u_1\e_1$ where $u_1=\sqrt{\uu\cdot\uu}$. Second, since 2- and 3- components of $\uu$ are all zeros, we could fix 1-axis and further rotate 2- and 3-axes of the coordinate system such that $\vv=v_1\e_1+v_2\e_2$ with $v_2>0$, while keeping $\uu=u_1\e_1$. It is well-known that the composition of two rotations is still a rotation. In a word, we say that, under a proper positive oriented orthonormal basis $\{\e_1,\e_2,\e_3\}$, two vectors $\uu$ and $\vv$ could be represented as
\begin{equation*}
  \uu=u_1\e_1 \qquad\text{ and }\qquad \vv=v_1\e_1+v_2\e_2,
\end{equation*}
respectively. If $\uu$ and $\vv$ are not collinear, we can further obtain $u_1>0$ and $v_2>0$.

Furthermore, we define $g(\theta)$ as a rotation in the 2-3 plane with a representing array
\begin{equation*}
  (g(\theta))_{ij} = \left(\begin{array}{ccc}
    1 & 0 & 0 \\
    0 & \cos\theta & -\sin\theta \\
    0 & \sin\theta & \cos\theta \\
  \end{array}\right).
\end{equation*}
As mentioned earlier, we have $g(\theta)\ast \e_1=\e_1$ for all $\theta$ and hence $g(\theta)\ast (\e_1\otimes\e_1)=\e_1\otimes\e_1$. By linear algebra, the second order symmetric and traceless tensor
\begin{equation}\label{d_0}
  {\bf d}_0=-2\e_1\otimes\e_1+\e_2\otimes\e_2+\e_3\otimes\e_3
\end{equation}
satisfies $g(\theta)\ast {\bf d}_0= {\bf d}_0$ for all $\theta$. The skew-symmetric tensor $(\e_2\otimes\e_3-\e_3\otimes\e_2)$ also satisfies $g(\theta)\ast (\e_2\otimes\e_3-\e_3\otimes\e_2)= \e_2\otimes\e_3-\e_3\otimes\e_2$ for all $\theta$.

Now, we consider third order symmetric and traceless tensors
\begin{equation}\label{t_0}
\begin{aligned}
  {\bf d}_1(\alpha,\beta,\gamma) =&~ \gamma(-2\e_1\otimes\e_1\otimes\e_1
        +\e_1\otimes\e_2\otimes\e_2+\e_2\otimes\e_1\otimes\e_2+\e_2\otimes\e_2\otimes\e_1 \\
    &{}~~~~ +\e_1\otimes\e_3\otimes\e_3+\e_3\otimes\e_1\otimes\e_3+\e_3\otimes\e_3\otimes\e_1) \\
    &{} +\alpha(\e_2\otimes\e_2\otimes\e_2-\e_2\otimes\e_3\otimes\e_3-\e_3\otimes\e_2\otimes\e_3-\e_3\otimes\e_3\otimes\e_2) \\
    &{} +\beta(\e_2\otimes\e_2\otimes\e_3+\e_2\otimes\e_3\otimes\e_2+\e_3\otimes\e_2\otimes\e_2-\e_3\otimes\e_3\otimes\e_3).
\end{aligned}
\end{equation}
By calculations, it yields that
\begin{equation}\label{aoei}
  g(\theta)\ast {\bf d}_1(\alpha,\beta,\gamma) = {\bf d}_1(\widetilde{\alpha},\widetilde{\beta},\gamma) \qquad\text{ and }\qquad
  \alpha^2+\beta^2 = \widetilde{\alpha}^2+\widetilde{\beta}^2,
\end{equation}
where $\widetilde{\alpha}=\alpha\cos3\theta-\beta\sin3\theta$ and $\widetilde{\beta}=\alpha\sin3\theta+\beta\cos3\theta$.
Clearly, we have $g(\theta)\ast {\bf d}_1(0,0,\gamma)= {\bf d}_1(0,0,\gamma)$ for all $\theta$ and $\gamma$.
We note that patterns $\e_1$, $\e_1\otimes\e_1$, $\e_2\otimes\e_3-\e_3\otimes\e_2$, ${\bf d}_0$, and ${\bf d}_1$ are useful for the following analysis on recovering the tensor $\A$.


The third order symmetric and traceless tensor $\A$ has a representing array $A_{ijk}:${\footnotesize%
\begin{equation*}
  \left(\begin{array}{ccc|ccc|ccc}
    A_{111} & A_{112} & A_{113}          & A_{112} & A_{122} & A_{123}          & A_{113} & A_{123} & -A_{111}-A_{122} \\
    A_{112} & A_{122} & A_{123}          & A_{122} & A_{222} & A_{223}          & A_{123} & A_{223} & -A_{112}-A_{222} \\
    A_{113} & A_{123} & -A_{111}-A_{122} & A_{123} & A_{223} & -A_{112}-A_{222} & -A_{111}-A_{122} & -A_{112}-A_{222} & -A_{113}-A_{233}
  \end{array}\right),
\end{equation*}
}which has seven independent elements $A_{111},A_{122},A_{112},A_{222},A_{113},A_{223},$ and $A_{123}$. To determine these elements, we consider the following cases. Before we start, we give two propositions.

\begin{Proposition}\label{Prop-1}
  If $I_2:=\mathrm{tr}\B=0$ or $I_4:=\mathrm{tr}\B^2=0$, $A_{ijk}$ is a zero tensor.
\end{Proposition}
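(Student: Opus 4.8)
The plan is to show that the tensor $\B=(B_{ij})$, defined by $B_{ij}=A_{ik\ell}A_{jk\ell}$, is positive semidefinite and that its vanishing trace or vanishing Frobenius norm forces $\A$ itself to vanish. First I would observe that $\B$ is manifestly symmetric, and that for any vector $\x=(x_i)$ one has $\x\cdot\B\x = x_i x_j A_{ik\ell}A_{jk\ell} = \sum_{k,\ell}\bigl(\sum_i x_i A_{ik\ell}\bigr)^2 \ge 0$, so $\B\succeq 0$. In particular $\tr\B = B_{ii} = \sum_{i,k,\ell} A_{ik\ell}^2 \ge 0$, and $\tr\B = 0$ holds if and only if every component $A_{ik\ell}$ is zero, i.e. $\A$ is the zero tensor. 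This disposes of the case $I_2 = \tr\B = 0$ immediately — indeed this direction needs nothing about the harmonic structure of $\A$ at all.

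For the case $I_4 := \tr\B^2 = 0$, I would use the positive semidefiniteness established above: since $\B$ is symmetric positive semidefinite, $\tr\B^2 = \sum_{i,j} B_{ij}^2 = \|\B\|_F^2 \ge 0$, and $\tr\B^2 = 0$ forces $\B = 0$, hence $B_{ii}=0$ for each $i$, hence $\tr\B = 0$, and we are back in the first case. So the two hypotheses both collapse to $\B = 0$, which collapses to $\A = 0$. Alternatively, and perhaps more transparently, I could note directly that for a symmetric positive semidefinite matrix the eigenvalues $\lambda_1,\dots,\lambda_3$ are nonnegative, so $\tr\B = \sum\lambda_i$ and $\tr\B^2 = \sum\lambda_i^2$ each vanish precisely when all $\lambda_i = 0$, i.e. when $\B = 0$; the chain $\B=0 \Rightarrow \sum_{i,k,\ell}A_{ik\ell}^2 = \tr\B = 0 \Rightarrow \A = 0$ then finishes the argument.

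I do not anticipate a genuine obstacle here; the only point requiring minor care is the bookkeeping of the contraction pattern, namely that $\tr\B = B_{ii} = A_{ik\ell}A_{ik\ell}$ is exactly the sum of squares of all entries of the representing array of $\A$, so its vanishing is equivalent to $\A=0$. One should also remark that the converse directions are trivial (if $\A=0$ then $\B=0$ and all listed invariants vanish), which is consistent with the later case analysis using $I_2, I_4$ as the first branching invariants. No use of the harmonic (traceless, fully symmetric) structure of $\A$, nor of the rotations $g(\theta)$ or the model tensors ${\bf d}_0, {\bf d}_1$, is needed for this proposition.
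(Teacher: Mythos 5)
Your argument is correct and is essentially the paper's own proof: the paper likewise notes that $I_2=A_{ijk}A_{ijk}$ is a sum of squares, so $I_2=0$ forces $\A=0$, and that $I_4=B_{ij}B_{ij}=0$ forces $\B=0$, hence $B_{ii}=I_2=0$ and again $\A=0$. The additional remarks on positive semidefiniteness and eigenvalues are harmless but not needed.
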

\begin{proof}
  It is straightforward to see that $A_{ijk}=0$ for all $i,j,$ and $k$ if $I_2=A_{ijk}A_{ijk}=0$.
  On the other hand, if $I_4=B_{ij}B_{ij}=0$, we have $B_{ij}=0$ for all $i$ and $j$. Thus, $B_{ii}=I_2=0$ and hence $A_{ijk}=0$ for all $i,j,$ and $k$.
\end{proof}

\begin{Proposition}\label{Prop-2}
  Let $\gamma,\zeta$, and $\Delta\ge0$ be constants and let $\alpha$ and $\beta$ be two undetermined parameters.
  Suppose vectors $\uu$, and $\vv$ are parallel to $\e_1$, $\D=\zeta{\bf d}_0$ and $\A={\bf d}_1(\alpha,\beta,\gamma)$ with $\alpha^2+\beta^2=\Delta$,  under a positive oriented orthonormal basis $\{\e_1,\e_2,\e_3\}$.
  Then, for all $\alpha$ and $\beta$ satisfying $\alpha^2+\beta^2=\Delta$, tensor groups $(\A={\bf d}_1(\alpha,\beta,\gamma),\D,\uu,\vv)$ are in the same $\SO$-orbit with a typical group $({\bf d}_1(\sqrt{\Delta},0,\gamma),\D,\uu,\vv)$.
\end{Proposition}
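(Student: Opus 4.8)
The plan is to exhibit a single rotation in $\SO$ which simultaneously fixes $\D$, $\uu$, $\vv$ and carries $\A={\bf d}_1(\alpha,\beta,\gamma)$ to ${\bf d}_1(\sqrt{\Delta},0,\gamma)$; since lying in the same $\SO$-orbit is exactly the existence of such a rotation, this will finish the proof. The natural candidate is a rotation $g(\theta)$ in the 2--3 plane: by the computations preceding the statement, $g(\theta)$ acts trivially on every pattern aligned with $\e_1$, while by \eqref{aoei} it acts on the pair $(\alpha,\beta)$ as a planar rotation through the angle $3\theta$. So it will suffice to choose $\theta$ so that this planar rotation takes $(\alpha,\beta)$ to $(\sqrt{\Delta},0)$.

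First I would dispose of the degenerate case $\Delta=0$: then $\alpha=\beta=0$, so ${\bf d}_1(\alpha,\beta,\gamma)={\bf d}_1(0,0,\gamma)={\bf d}_1(\sqrt{\Delta},0,\gamma)$ and the claim is trivial. Assuming $\Delta>0$, I would use $\alpha^2+\beta^2=\Delta$ to write $(\alpha,\beta)=\sqrt{\Delta}(\cos\phi,\sin\phi)$ for some angle $\phi$, set $\theta=-\phi/3$, and substitute into \eqref{aoei}; the addition formulas then give $\widetilde{\alpha}=\alpha\cos\phi+\beta\sin\phi=\sqrt{\Delta}$ and $\widetilde{\beta}=-\alpha\sin\phi+\beta\cos\phi=0$, hence $g(\theta)\ast\A={\bf d}_1(\sqrt{\Delta},0,\gamma)$.

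It then remains to verify that this same $g(\theta)$ fixes the other three tensors. Because $\uu$ and $\vv$ are parallel to $\e_1$ and $g(\theta)\ast\e_1=\e_1$, we get $g(\theta)\ast\uu=\uu$ and $g(\theta)\ast\vv=\vv$; and since $g(\theta)\ast{\bf d}_0={\bf d}_0$ we get $g(\theta)\ast\D=\zeta\,(g(\theta)\ast{\bf d}_0)=\D$. Hence $g(\theta)\ast(\A,\D,\uu,\vv)=({\bf d}_1(\sqrt{\Delta},0,\gamma),\D,\uu,\vv)$, so the two tensor groups lie in the same $\SO$-orbit; and since $g(\theta)$ is a rotation in the 2--3 plane it has determinant $1$, so the whole argument stays within a positively oriented orthonormal basis, as required.

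As for the main obstacle: there is essentially none, because all of the content has been packaged into the already-established identity \eqref{aoei} together with the invariance of the $\e_1$-aligned patterns $\e_1$, $\e_1\otimes\e_1$ and ${\bf d}_0$ under $g(\theta)$. The only points needing care are treating $\Delta=0$ separately and checking that one and the same rotation $g(\theta)$ acts correctly on all four tensors at once — which is precisely the reason the normal forms ${\bf d}_0$, ${\bf d}_1$ and the $\e_1$-aligned vectors were introduced in the preceding paragraphs.
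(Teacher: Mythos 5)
Your proposal is correct and follows essentially the same route as the paper: both rely on the 2--3 plane rotation $g(\theta)$ fixing the $\e_1$-aligned patterns and on the identity \eqref{aoei}, with your version merely being more explicit about choosing $\theta=-\phi/3$ and about the degenerate case $\Delta=0$. The only difference is that the paper additionally records that the derived tensors $\B,\E,\F,\G,\cc$ are likewise invariant under $g(\theta)$ (needed for the surrounding recovery argument but not for the statement itself), so no gap is introduced by omitting this.
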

\begin{proof}
  Denote $\uu=u_1\e_1$ and $\vv=v_1\e_1$. By direct computations, we have
  \begin{equation*}
  \left\{\begin{aligned}
  & \B = 2(\gamma^2+\Delta){\bf d}_0 + (10\gamma^2+4\Delta)\e_1\otimes\e_1, \\
  & \E = -3\gamma\zeta(\e_2\otimes\e_3-\e_3\otimes\e_2), \\
  & \F = u_1\gamma{\bf d}_0, \\
  & \G = v_1\gamma{\bf d}_0, \\
  & \cc = 4\gamma(\Delta-2\gamma^2)\e_1.
  \end{aligned}\right.
  \end{equation*}
  Clearly, when we fix 1-axis and rotate 2- and 3-axes of the positive oriented orthonormal basis $\{\e_1,\e_2,\e_3\}$, tensors $\B,\D,\E,\F,\G,\cc,\uu,\vv$ are invariant. For the tensor $\A$, on one hand, under these rotation in the 2-3 plane, the rotated tensor $\widetilde{\A}$ could also be represented by the pattern
  \begin{equation*}
    \widetilde{\A} = {\bf d}_1(\widetilde{\alpha},\widetilde{\beta},\gamma),
  \end{equation*}
  where
  \begin{equation}\label{AAA-9}
    \widetilde{\alpha}^2+\widetilde{\beta}^2 = \Delta.
  \end{equation}
  On the other hand, all possible tensors ${\bf d}_1(\widetilde{\alpha},\widetilde{\beta},\gamma)$ satisfying \eqref{AAA-9} are located in the same $\SO$-orbit and hence are equivalence. In this sense, the $\SO$-orbit of tensor $\A$ is determined. For simplicity, we set $\A={\bf d}_1(\sqrt{\Delta},0,\gamma)$ in a typical tensor group.
\end{proof}

If $\mathrm{tr}\B=0$, we have $\A=0$ by Proposition \ref{Prop-1}. It is straightforward to construct the piezoelectric tensor from \eqref{piezo-decom}. In the following analysis, we suppose $\A\ne0$.

\medskip

\textbf{Case I: Vectors $\cc,\uu,$ and $\vv$ are not collinear.} At the beginning, we introduce a tensor
\begin{equation}\label{OP1a}
  \K := A_{ijk}c_k\e_i\otimes\e_j = \left(2B_{i\ell}B_{\ell j}-I_2B_{ij}-\frac{2I_4-I_2^2}{3}\delta_{ij}\right)\e_i\otimes\e_j,
\end{equation}
where $I_2=\tr \B$ and $I_4=\tr \B^2$ are defined in Proposition \ref{Prop-1} and the last equality is verified directly by computation. Clearly, $\K=(K_{ij})$ is completely determined by $\B$. Next, we use equations $A_{ijk}c_k=K_{ij}$, $A_{ijk}u_k=F_{ij}$, and $A_{ijk}v_k=G_{ij}$ for recovering elements of $A_{ijk}$.

Without loss of generality, we assume that two nonzero vectors $\cc$ and $\uu$ are not collinear. By choosing a positive oriented orthonormal basis $\{\e_1,\e_2,\e_3\}$, we have
\begin{equation*}
  \cc = c_1\e_1 \qquad\text{ and }\qquad \uu = u_1\e_1+u_1\e_2,
\end{equation*}
where $c_1=\sqrt{\cc\cdot\cc}>0$ and $u_2>0$. Recalling equations $A_{ijk}c_k=K_{ij}$ and $A_{ijk}u_k=F_{ij}$, we have
\begin{equation*}
\left\{\begin{aligned}
  & A_{111} = \frac{1}{c_1}K_{11}, \\
  & A_{112} = \frac{1}{c_1}K_{12}, \\
  & A_{113} = \frac{1}{c_1}K_{13}, \\
  & A_{122} = \frac{1}{c_1}K_{22}, \\
  & A_{123} = \frac{1}{c_1}K_{23}, \\
  & A_{222} = \frac{1}{u_2}(F_{22}-A_{122}u_1), \\
  & A_{223} = \frac{1}{u_2}(F_{23}-A_{123}u_1).
\end{aligned}\right.
\end{equation*}
All elements of $A_{ijk}$ are known and hence the third order symmetric and traceless tensor $\A$ is determined under the basis $\{\e_1,\e_2,\e_3\}$.

By a similar discussion, when $\cc$ and $\vv$ (resp. $\uu$ and $\vv$) are not collinear, we use $A_{ijk}c_k=K_{ij}$ and $A_{ijk}v_k=G_{ij}$ (resp. $A_{ijk}u_k=F_{ij}$ and $A_{ijk}v_k=G_{ij}$) to determine $\A$.

In the remainder two cases II and III, we suppose that vectors $\uu,\vv,$ and $\vt{c}$ are collinear.

\medskip

\textbf{Case II: $\D=0$.} 

Case II.1: Vectors $\uu,\vv,$ and $\vt{c}$ are not all zero vectors. Since $\uu,\vv,$ and $\vt{c}$ are collinear, we choose a proper positive oriented orthonormal basis $\{\e_1,\e_2,\e_3\}$ such that $\uu=u_1\e_1, \vv=v_1\e_1, \cc=c_1\e_1$ and the representing array of $\B$ has the form
\begin{equation*}
  B_{ij}=\left(\begin{array}{ccc}
    B_{11} & B_{12} & B_{13} \\ B_{12} & B_{22} & 0 \\ B_{13} & 0 & B_{33} \\
  \end{array}\right).
\end{equation*}.

If $\uu\ne0$, we solve $A_{ijk}u_k=F_{ij}$ and obtain five elements of $A_{ijk}$:
\begin{equation*}
\left\{\begin{aligned}
  & A_{111} = \frac{1}{u_1}F_{11}, \\
  & A_{112} = \frac{1}{u_1}F_{12}, \\
  & A_{113} = \frac{1}{u_1}F_{13}, \\
  & A_{122} = \frac{1}{u_1}F_{22}, \\
  & A_{123} = \frac{1}{u_1}F_{23}.
\end{aligned}\right.
\end{equation*}
In a similar way, we compute $A_{111},A_{112},A_{113},A_{122},$ and $A_{123}$ from $A_{ijk}v_k=G_{ij}$ and $A_{ijk}c_k=K_{ij}$ if $\vv\ne0$ and $\cc\ne0$, respectively.

Case II.1.1: $B_{22}\ne B_{33}$. From $A_{ijk}B_{jk}=c_i=0$ for $i\in\{2,3\}$, we have
\begin{equation*}
\left\{\begin{aligned}
  & (B_{22}-B_{33})A_{222} = -A_{112}(B_{11}-B_{33})-2A_{122}B_{12}-2A_{123}B_{13}, \\
  & (B_{22}-B_{33})A_{223} = -A_{113}(B_{11}-B_{33})-2A_{123}B_{12}+2(A_{111}+A_{122})B_{13}.
\end{aligned}\right.
\end{equation*}
Hence, we get $A_{222}$ and $A_{223}$ immediately.

Case II.1.2: $B_{22}=B_{33}$. Since $A_{ik\ell}A_{jk\ell}=B_{ij}$, combining $B_{23}=0$ and $B_{22}-B_{33}=0$, we establish a linear system
\begin{equation}\label{AAA-4}
\left\{\begin{aligned}
  & A_{113}A_{222}-A_{112}A_{223} = 2A_{111}A_{123}-2A_{112}A_{113},  \\
  & A_{112}A_{222}+A_{113}A_{223} = -A_{111}^2-2A_{111}A_{122}-A_{113}^2.
\end{aligned}\right.
\end{equation}
The determinant of this linear system is obviously $A_{112}^2+A_{113}^2\ge0$.

Case II.1.2.1: $A_{112}\ne0$ or $A_{113}\ne0$. Clearly, we solve $A_{222}$ and $A_{223}$ from the system \eqref{AAA-4} straightforwardly.

Case II.1.2.2: $A_{112}=A_{113}=0$. The linear system \eqref{AAA-4} reduces to
\begin{equation}\label{AAA-5}
\left\{\begin{aligned}
  & 2A_{111}A_{123}=0,  \\
  & A_{111}(A_{111}+2A_{122})=0.
\end{aligned}\right.
\end{equation}

Case II.1.2.2.1: If $A_{111}\ne0$. From the system \eqref{AAA-5}, we have
\begin{equation*}
  A_{123}=0 \qquad\text{ and }\qquad A_{111}=-2A_{122}.
\end{equation*}


By examining the equation $A_{2jk}A_{2jk}=B_{22}$, we find
\begin{equation}\label{AAA-8}
  A_{222}^2+A_{223}^2 = \frac{1}{2}B_{22}-A_{122}^2.
\end{equation}
Owing to $A_{112}=A_{113}=A_{123}=0$ and $A_{111}=-2A_{122}$, the tensor $\A$ satisfies the pattern $\A={\bf d}_1(A_{222},A_{223},A_{122})$.
Furthermore, $\A={\bf d}_1(A_{222},A_{223},A_{122})$ with \eqref{AAA-8}, $\D=0$, $\uu=u_1\e_1$, and $\vv=v_1\e_1$ satisfy assumptions of Proposition \ref{Prop-2}. Hence, for simplicity, we set
\begin{equation*}
  A_{222} = \sqrt{\frac{1}{2}B_{22}-A_{122}^2} \qquad\text{ and }\qquad A_{223}=0.
\end{equation*}

Case II.1.2.2.2: $A_{111}=0$. Equations $A_{ik\ell}A_{jk\ell}=B_{ij}$ reduce to
\begin{equation}\label{abcdef}
\left\{\begin{aligned}
  & A_{122}^2+A_{123}^2 = \frac{1}{2}B_{11}, \\
  & A_{222}^2+A_{223}^2 = \frac{1}{2}(B_{22}-B_{11}), \\
  & A_{223}A_{122}-A_{222}A_{123} = \frac{1}{2}B_{13}, \\
  & A_{222}A_{122}+A_{223}A_{123} = \frac{1}{2} B_{12}.
\end{aligned}\right.
\end{equation}

We claim that $B_{12}=B_{13}=0$. Otherwise, we assume $B_{12}^2+B_{13}^2>0$ for contradiction. Since $A_{111}=A_{112}=A_{113}=0$ and $B_{22}=B_{33}$, equations $A_{ijk}B_{jk}=c_i$ for $i\in\{2,3\}$ are indeed
\begin{equation*}
\left\{\begin{aligned}
  & 2B_{12}A_{122}+2B_{13}A_{123} =0, \\
  & -2B_{13}A_{122}+2B_{12}A_{123} =0.
\end{aligned}\right.
\end{equation*}
Clearly, the determinant of this linear system is positive. Hence $A_{122}=A_{123}=0$, which contradicts the last two equations in \eqref{abcdef}.

Clearly, when $B_{11}=B_{22}$, we get $A_{222}=A_{223}=0$ by solving the second equation in \eqref{abcdef}.

Then, we consider the case $B_{11}\ne B_{22}$. By the second equation in \eqref{abcdef}, we have $A_{222}^2+A_{223}^2>0$, which is the determinant of the last two linear equations in \eqref{abcdef}. Since $B_{13}=B_{12}=0$, we get $A_{122}=A_{123}=0$. By now, we know $A_{111}=A_{112}=A_{113}=A_{122}=A_{123}=0$ and hence $\A={\bf d}_1(A_{222},A_{223},0)$. Furthermore, $B_{11}=0$, $B_{22}>0$, and
\begin{equation}\label{AAA-88}
  A_{222}^2+A_{223}^2 = \frac{1}{2}B_{22}.
\end{equation}
Clearly, $\A={\bf d}_1(A_{222},A_{223},0)$ with \eqref{AAA-88}, $\D=0$, $\uu=u_1\e_1$, and $\vv=v_1\e_1$ satisfy assumptions of Proposition \ref{Prop-2}. Hence, for convenience, we set
\begin{equation*}
  A_{222} = \sqrt{\frac{1}{2}B_{22}} \qquad\text{ and }\qquad A_{223}=0.
\end{equation*}

Case II.2: $\uu=\vv=\cc=0$. Hence, tensors $\D,\E,\F,\G,$ and $\K$ are zeros. From \eqref{OP1a}, we get
\begin{equation*}
  K_{ij}=2B_{i\ell}B_{\ell j}-I_2B_{ij}-\frac{2I_4-I_2^2}{3}\delta_{ij}=0   \qquad \forall i,j.
\end{equation*}
We now choose a proper positive oriented orthonormal basis $\{\e_1,\e_2,\e_3\}$ such that the representing matrix of $\B$ is diagonal.
Clearly, the above equations with $i\ne j$ are trivial. Then, we consider the above equation with $i=j$ and obtain
\begin{equation*}
  2B_{\underline{ii}}^2-I_2B_{\underline{ii}}-\frac{2I_4-I_2^2}{3}=0  \qquad \forall i=1,2,3,
\end{equation*}
where the repeated subscript $i$ with underline is not summarized. Hence, we claim that three diagonal elements $B_{11}$, $B_{22}$, and $B_{33}$ are all roots of a quadratic equation
\begin{equation}\label{z1}
  2x^2-I_2x-\frac{2I_4-I_2^2}{3} = 0.
\end{equation}
Hence, at least two diagonal elements of $B_{ij}$ are equivalent. Thus, we assume
\begin{equation*}
  \B= B_{11}\e_1\otimes\e_1 + B_{22}(\e_2\otimes\e_2+\e_3\otimes\e_3).
\end{equation*}

Case II.2.1: $B_{11}\ne B_{22}$. By $A_{ijk}B_{jk}=c_i=0$ that are
\begin{equation*}
\left\{\begin{aligned}
  & (B_{11}-B_{22})A_{111}=0, \\
  & (B_{11}-B_{22})A_{112}=0, \\
  & (B_{11}-B_{22})A_{113}=0, \\
\end{aligned}\right.
\end{equation*}
we immediately have $A_{111}=A_{112}=A_{113}=0$. Equations $A_{ik\ell}A_{jk\ell}=B_{ij}$ reduce to
\begin{equation*}
\left\{\begin{aligned}
  & A_{122}^2+A_{123}^2=\frac{1}{2}B_{11}, \\
  & A_{222}^2+A_{223}^2=\frac{1}{2}(B_{22}-B_{11}), \\
  & A_{223}A_{122}-A_{222}A_{123} = 0, \\
  & A_{222}A_{122}+A_{223}A_{123} = 0.
\end{aligned}\right.
\end{equation*}
Since $B_{11}\ne B_{22}$, the determinant $A_{222}^2+A_{223}^2$ of the last two linear equations are nonzero. Hence, we have
\begin{equation*}
  A_{122}=A_{123}=0,
\end{equation*}
$B_{11}=0$ and $B_{22}>0$.
Hence, $\A={\bf d}_1(A_{222},A_{223},0)$ with $A_{222}^2+A_{223}^2=\frac{1}{2}B_{22}$, $\D=0$, $\uu=0$, and $\vv=0$ satisfy assumptions of Proposition \ref{Prop-2}. Hence, for convenience, we set
\begin{equation*}
  A_{222} = \sqrt{\frac{1}{2}B_{22}} \qquad\text{ and }\qquad A_{223}=0.
\end{equation*}

Case II.2.2: $B_{11}=B_{22}$ and hence $B_{ij}=B_{11}\delta_{ij}$. We could choose the positive oriented orthonormal basis $\{\e_1,\e_2,\e_3\}$ freely. Let $\e_1$ be the maximizer of $f(x)=A_{ijk}x_ix_jx_k$ on the unit sphere $\{(x_1,x_2,x_3):x_1^2+x_2^2+x_3^2=1\}$ and let $\e_2$ be the maximizer of $f(x)=A_{ijk}x_ix_jx_k$ in the circle $\{(0,x_2,x_3):x_2^2+x_3^2=1\}$. By KKT condition and some calculations, we have
\begin{equation*}
  A_{112}=A_{113}=A_{223}=0, \qquad \text{ and } \qquad A_{111}\ge A_{222}\ge 0.
\end{equation*}

We suppose $A_{111}>0$; Otherwise $A_{ijk}$ is a zero array.
By the equation $A_{2jk}A_{3jk}=B_{23}$, we immediately have $$ A_{123}=0. $$
Equation $A_{1jk}A_{2jk}=B_{12}$ reduces to $A_{222}(A_{111}+2A_{122})=0$.

(a) Assume $A_{222}=0$. Equations $A_{ik\ell}A_{jk\ell}=B_{ij}$ with $i=j$ reduce to
$$ A_{111}^2+A_{111}A_{122}+A_{122}^2 = A_{122}^2 = A_{111}^2+2A_{111}A_{122}+A_{122}^2 = \frac{1}{2}B_{11}. $$
By subtraction, we have $A_{111}A_{122}=0$ and hence $A_{111}^2=0$, which contradicts the assumption $A_{111}>0$.

(b) Hence, $A_{122}=-\frac{1}{2}A_{111}$. Equations $A_{ik\ell}A_{jk\ell}=B_{ij}$ with $i=j=1,2$ reduce to
\begin{equation*}
  \frac{3}{4}A_{111}^2 = \frac{1}{4}A_{111}^2+A_{222}^2 = \frac{1}{2}B_{11}.
\end{equation*}
Hence, by $A_{111}\ge A_{222}\ge 0$, we get
\begin{equation*}
  A_{111}=\sqrt{\frac{2B_{11}}{3}}, \quad A_{222}=\sqrt{\frac{B_{11}}{3}}, \quad \text{ and } \quad A_{122}=-\sqrt{\frac{B_{11}}{6}}.
\end{equation*}

{\sl Remark.} We can process Case II.2 by introducing the characteristic polynomial of $\B$. Because $B_{ij}$ is a diagonal matrix, its diagonal elements are all eigenvalues of $\B$. Thus, we consider the characteristic polynomial of $\B$ which is a cubic function. Applying the Cayley-Hamiltom theorem for a 3-by-3 tensor $\B$, we get
\begin{equation*}
  \B^3 -(\tr\B)\B^2+\frac{1}{2}\left((\tr\B)^2-\tr\B^2\right)\B-\det(\B){\bf I}=0,
\end{equation*}
where ${\bf I}=\delta_{ij}\e_i\otimes\e_j$ is an identity tensor. By taking the trace operation, it yields that
\begin{equation*}
  \det(\B)=\frac{1}{6}\left( (\tr\B)^3-3\tr\B^2\tr\B+2\tr\B^3 \right).
\end{equation*}
Recalling $B_{ij}=A_{ik\ell}A_{jk\ell}$ and $c_i=A_{ijk}B_{jk}=0$, we have
\begin{equation*}
  \tr\B^3=-\frac{1}{6}\left( (\tr\B)^3-5\tr\B^2\tr\B-3\cc\cdot\cc \right)
         =-\frac{1}{6}\left( (\tr\B)^3-5\tr\B^2\tr\B \right).
\end{equation*}
Hence, combining the above three equations, we obtain
\begin{equation*}
  \B^3 -(\tr\B)\B^2+\frac{1}{2}\left((\tr\B)^2-\tr\B^2\right)\B-\frac{1}{9}\left( (\tr\B)^3-2\tr\B^2\tr\B \right){\bf I}=0,
\end{equation*}
that is,
\begin{equation*}
  \B^3 -I_2\B^2+\frac{I_2^2-I_4}{2}\B-\frac{I_2^3-2I_4I_2}{9}{\bf I}=0.
\end{equation*}
When the representing matrix of $\B$ is diagonal, its diagonal elements must satisfy a cubic equation
\begin{equation}\label{z2}
  x^3 -I_2x^2+\frac{I_2^2-I_4}{2}x-\frac{I_2^3-2I_4I_2}{9}=0.
\end{equation}
Clearly, this is the characteristic polynomial of $\B$. Moreover, there is no factor decomposition in the cubic function in \eqref{z2}.

We know that three diagonal elements $B_{11},B_{22},B_{33}$ of the diagonal matrix $B_{ij}$ satisfy equations \eqref{z1} and \eqref{z2} simultaneously, i.e.,
\begin{equation*}
\left\{\begin{aligned}
  & 2x^2-I_2x-\frac{2I_4-I_2^2}{3} = 0, \\
  & x^3 -I_2x^2+\frac{I_2^2-I_4}{2}x-\frac{I_2^3-2I_4I_2}{9}=0.
\end{aligned}\right.
\end{equation*}
That is to say, the above system has common roots. By the resultant theory in Algebra, this system of two polynomials has a common root if and only if its resultant vanishes:
\begin{equation*}
  \det\left(
        \begin{array}{ccccc}
          2 & -I_2 & -\frac{2I_4-I_2^2}{3} & 0 & 0 \\
          0 & 2 & -I_2 & -\frac{2I_4-I_2^2}{3} & 0 \\
          0 & 0 & 2 & -I_2 & -\frac{2I_4-I_2^2}{3} \\
          1 & -I_2 & \frac{I_2^2-I_4}{2} & -\frac{I_2^3-2I_4I_2}{9} & 0 \\
          0 & 1 & -I_2 & \frac{I_2^2-I_4}{2} & -\frac{I_2^3-2I_4I_2}{9} \\
        \end{array}
      \right) = 0.
\end{equation*}
By direct calculations, the resultant is indeed
\begin{equation*}
  \frac{1}{162}(I_2^2-3I_4)(I_2^2-2I_4)^2 = 0.
\end{equation*}
Hence, we only need to consider the following two cases.

\begin{itemize}
  \item When $I_2^2-3I_4=0$, the characteristic polynomial of $\B$ could be rewritten as
    \begin{equation*}
      x^3-I_2x^2+\frac{I_2^2}{3}x-\frac{1}{27}I_2^3 = \left(x-\frac{I_2}{3}\right)^3 = 0.
    \end{equation*}
    Hence $B_{11}=B_{22}=B_{33}=\frac{1}{3}I_2$, which is exactly Case II.2.2.
  \item When $I_2^2-2I_4=0$, the characteristic polynomial of $\B$ reduces to
    \begin{equation*}
      x^3-I_2x^2+\frac{I_2^2}{4}x = x\left(x-\frac{I_2}{2}\right)^2 = 0.
    \end{equation*}
    Hence $B_{11}=0$, $B_{22}=B_{33}=\frac{1}{2}I_2$. This is exactly Case II.2.1.
\end{itemize}

Hence, discussion of Case II is complete. Next, we study the last case.

\medskip

\textbf{Case III: $\D\ne0$.}
%
Since $\D$ is a second order symmetric and traceless tensor, we choose a proper positive oriented orthonormal basis $\{\e_1,\e_2,\e_3\}$ such that
\begin{equation*}
  \D = D_{11}\e_1\otimes\e_1+D_{22}\e_2\otimes\e_2-(D_{11}+D_{22})\e_3\otimes\e_3.
\end{equation*}
Clearly, if $D_{11}= D_{22}= -D_{11}-D_{22}$, we have $D_{11}=D_{22}=0$ which contradicts the assumption $\D\ne0$. Whereafter, we only need to consider the following two subcases:
\begin{itemize}
  \item[--] (III.A) $D_{11}\neq D_{22}$, $D_{11}\neq -D_{11}-D_{22}$, and $D_{22}\neq -D_{11}-D_{22}$,
  \item[--] (III.B) $D_{11}\ne D_{22}= -D_{11}-D_{22}$,
\end{itemize}
for recovering the third order symmetric and traceless tensor $\A$.

%


(III.A) Equations $E_{ij}=A_{ik\ell}\epsilon_{jm\ell}D_{km}$ are represented as
\begin{equation}\label{eqn-AD}
\left\{\begin{aligned}
  & (D_{11}+2D_{22})A_{123} = E_{11}, \\
  & (D_{11}+2D_{22})A_{223} = E_{21}, \\
  & -(D_{11}+2D_{22})(A_{112}+A_{222}) = E_{31}, \\
  & -(2D_{11}+D_{22})A_{113} = E_{12}, \\
  & -(2D_{11}+D_{22})A_{123} = E_{22}, \\
  & (2D_{11}+D_{22})(A_{111}+A_{122}) = E_{32}, \\
  & (D_{11}-D_{22})A_{112} = E_{13}, \\
  & (D_{11}-D_{22})A_{122} = E_{23}, \\
  & (D_{11}-D_{22})A_{123} = E_{33}.
\end{aligned}\right.
\end{equation}
By the assumption of (III.A), we have $D_{11}+2D_{22}\ne0$, $2D_{11}+D_{22}\ne0$, and $D_{11}-D_{22}\ne0$. Hence, seven independent elements $A_{111},A_{122},A_{112},A_{222},A_{113},A_{223},$ and $A_{123}$ are all solvable from \eqref{eqn-AD}.

(III.B) Suppose $D_{11}\ne D_{22}=-D_{11}-D_{22}$. Since the nonzero tensor $\D$ is traceless, we have $D_{11}=-2D_{22}$, $D_{22}\ne0$, and hence $\D=D_{22}{\bf d}_0$. Now, we solve \eqref{eqn-AD} for
\begin{equation*}
\left\{\begin{aligned}
  & A_{111} = -\frac{E_{32}-E_{23}}{3D_{22}}, \\
  & A_{122} = -\frac{E_{23}}{3D_{22}}, \\
  & A_{112} = -\frac{E_{13}}{3D_{22}}, \\
  & A_{113} = \frac{E_{12}}{3D_{22}}, \\
  & A_{123} = \frac{E_{22}}{3D_{22}}.
\end{aligned}\right.
\end{equation*}
Equations $A_{ijk}B_{jk}=c_i$ for $i\in\{2,3\}$ are written as
\begin{equation*}
\left\{\begin{aligned}
  & (B_{22}-B_{33})A_{222}+2B_{23}A_{223}=c_2+(B_{33}-B_{11})A_{112}-2B_{12}A_{122}-2B_{13}A_{123}, \\
  & -2B_{23}A_{222}+(B_{22}-B_{33})A_{223}=c_3+(B_{33}-B_{11})A_{113}-2B_{12}A_{123}+2B_{13}(A_{111}+A_{122})+2B_{23}A_{112}.
\end{aligned}\right.
\end{equation*}
Clearly, the determinant of a coefficient matrix of the above linear system in $A_{222}$ and $A_{223}$ is $(B_{22}-B_{33})^2+4B_{23}^2\ge0$.

(III.B.1) If $B_{22}-B_{33}\ne0$ or $B_{23}\ne0$, we solve $A_{222}$ and $A_{223}$ from the above linear system.

(III.B.2) Otherwise, we suppose $B_{22}=B_{33}$ and $B_{23}=0$. Equations $B_{33}-B_{22}=0$ and $B_{23}=0$ reduce to
\begin{equation}\label{AAA-3}
\left\{\begin{aligned}
  & A_{112}A_{222}+A_{113}A_{223} = -A_{111}^2-2A_{111}A_{122}-A_{113}^2, \\
  & A_{113}A_{222}-A_{112}A_{223} = 2A_{111}A_{123}-2A_{112}A_{113}.
\end{aligned}\right.
\end{equation}
The determinant of the above linear system in $A_{222}$ and $A_{223}$ is $-A_{112}^2-A_{113}^2\le0$.

(III.B.2.1) If $A_{112}\ne0$ or $A_{113}\ne0$, we solve $A_{222}$ and $A_{223}$ from the above linear system.

(III.B.2.2) Otherwise, we assume $A_{112}=A_{113}=0$. Then, the system \eqref{AAA-3} reduces to
\begin{equation*}
\left\{\begin{aligned}
  & A_{111}(A_{111}+2A_{122})=0, \\
  & 2A_{111}A_{123} =0.
\end{aligned}\right.
\end{equation*}

(III.B.2.2.1) If $A_{111}\ne0$, we have $A_{111}=-2A_{122}$ and $A_{123}=0$. From the equation $A_{2jk}A_{2jk}=B_{22}$, we have
\begin{equation}\label{AAA-10}
  A_{222}^2+A_{223}^2 = \frac{1}{2}B_{22}-A_{122}^2.
\end{equation}
Thus $\A={\bf d}_1(A_{222},A_{223},A_{122})$.

Now, we turn to $A_{ijk}u_k=F_{ij}$ which implies
\begin{equation*}
\left\{\begin{aligned}
  & u_2A_{222}+u_3A_{223} = F_{22}-u_1A_{122}, \\
  & -u_3A_{222}+u_2A_{223} = F_{23}.
\end{aligned}\right.
\end{equation*}
If $u_2^2+u_3^2\ne0$, we solve the above linear system and obtain $A_{222}$ and $A_{223}$ immediately.
Otherwise, we have $u_2=u_3=0$ and hence $\uu=u_1\e_1$.

When $\vv\ne0$ and $\cc\ne0$, we process a similar discussion using $A_{ijk}v_k=G_{ij}$ and $A_{ijk}c_k=K_{ij}$, respectively.

Whereafter, we consider the case that $\uu,\vv,\cc$ are parallel to $\e_1$. Since $\D=D_{22}{\bf d}_0$, $\A={\bf d}_1(A_{222},A_{223},A_{122})$ satisfying \eqref{AAA-10}, we may set
\begin{equation*}
  A_{222} = \sqrt{\frac{1}{2}B_{22}-A_{122}^2} \qquad\text{ and }\qquad A_{223}=0
\end{equation*}
by Proposition \ref{Prop-2}.

(III.B.2.2.2) Suppose $A_{111}=0$. Equations on $B_{12}$ and $B_{13}$ reduce to
\begin{equation*}
\left\{\begin{aligned}
  & A_{122}A_{222}+A_{123}A_{223} = \frac{1}{2}B_{12}, \\
  & -A_{123}A_{222}+A_{122}A_{223} = \frac{1}{2}B_{13}.
\end{aligned}\right.
\end{equation*}
The determinant of the above linear system in $A_{222}$ and $A_{223}$ is $A_{122}^2+A_{123}^2\ge0$.
If $A_{122}\ne0$ or $A_{123}\ne0$, we solve $A_{222}$ and $A_{223}$ from the above linear system.
Otherwise, we assume $A_{122}=A_{123}=0$. Moreover, the equation $A_{2jk}A_{2jk}=B_{22}$ means
\begin{equation}\label{AAA-11}
  A_{222}^2+A_{223}^2 = \frac{1}{2}B_{22}.
\end{equation}
Hence, $\A={\bf d}_1(A_{222},A_{223},0)$. 

Next, we consider $A_{ijk}u_k=F_{ij}$, which yields
\begin{equation*}
\left\{\begin{aligned}
  & u_2A_{222}+u_3A_{223} = F_{22}, \\
  & -u_3A_{222}+u_2A_{223} = F_{23}.
\end{aligned}\right.
\end{equation*}
If $u_2^2+u_3^2\ne0$, we solve the above linear system and obtain $A_{222}$ and $A_{223}$ straightforwardly.
Otherwise, we have $u_2=u_3=0$ and hence $\uu=u_1\e_1$.

When $\vv\ne0$ and $\cc\ne0$, we process a similar discussion using $A_{ijk}v_k=G_{ij}$ and $A_{ijk}c_k=K_{ij}$, respectively.

Finally, we consider the case that $\uu,\vv,\cc$ are all parallel to $\e_1$. Moreover, it holds that $\A={\bf d}_1(A_{222},A_{223},0)$ satisfying \eqref{AAA-11} and $\D=D_{22}{\bf d}_0$. By Proposition \ref{Prop-2}, we may set
\begin{equation*}
  A_{222} = \sqrt{\frac{1}{2}B_{22}} \qquad\text{ and }\qquad A_{223}=0.
\end{equation*}

In Sum, we establish the following theorem.

\begin{Theorem}\label{Thm:3Ddev}
  The $\SO$-orbit of the piezoelectric tensor $\Pie=(P_{ijk})$ is determined by a group of tensors
  \begin{equation}\label{FBtens}
    \D,~~ \B,~~ \F,~~ \G,~~ \H,~~ \uu,~~ \vv,~~  \ww,\text{ and } \cc.
  \end{equation}
\end{Theorem}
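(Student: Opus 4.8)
The plan is to reduce the theorem to the single task of recovering the harmonic part $\A\in\SH{3}$, and then to exhaust that recovery by a case analysis driven by the degeneracies of the auxiliary vectors $\cc,\uu,\vv$ and of $\D$. First I would note that, by the orthogonal irreducible decomposition \eqref{piezo-decom}, a piezoelectric tensor $\Pie$ is completely determined by the quadruple $(\A,\D,\uu,\vv)$. Since $\D,\uu,\vv$ already appear in the list \eqref{FBtens}, it suffices to show that, after fixing a suitable positive oriented orthonormal basis in which the listed tensors are presented, the seven independent components $A_{111},A_{122},A_{112},A_{222},A_{113},A_{223},A_{123}$ are recoverable up to an $\SO$-transformation that fixes $\D,\uu,\vv$. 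The relations I would use are the contractions encoded in the auxiliary tensors: $A_{ijk}c_k=K_{ij}$ (with $\K$ expressible through $\B$ alone, cf.\ \eqref{OP1a}), $A_{ijk}u_k=F_{ij}$, $A_{ijk}v_k=G_{ij}$, $E_{ij}=A_{ik\ell}\epsilon_{jm\ell}D_{km}$ (so $\E$, equivalently $\H$ and $\ww$), and the quadratic relations $B_{ij}=A_{ik\ell}A_{jk\ell}$. If $\tr\B=0$ or $\tr\B^2=0$, Proposition \ref{Prop-1} kills $\A$ and \eqref{piezo-decom} finishes the argument, so throughout I assume $\A\neq0$.

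Second, I would invoke Smith's composition-of-rotations device: if two of the vectors $\cc,\uu,\vv$ are non-collinear, a rotation places one along $\e_1$ and the next in the $\e_1$-$\e_2$ plane, and then the two corresponding linear systems among $A_{ijk}c_k=K_{ij}$, $A_{ijk}u_k=F_{ij}$, $A_{ijk}v_k=G_{ij}$ carry enough independent equations to solve for all seven components outright (Case I). If $\cc,\uu,\vv$ are collinear, I would split on whether $\D$ vanishes. When $\D=0$ (Case II), I would diagonalize $\B$ where possible: the equations $A_{ijk}B_{jk}=0$ annihilate several components, the rest come from $B_{ij}=A_{ik\ell}A_{jk\ell}$, and the genuinely degenerate subcases (e.g.\ $\B$ with a repeated eigenvalue or proportional to the identity) are handled either by taking $\e_1$ as a maximizer of $A_{ijk}x_ix_jx_k$ on the unit sphere, or by recognizing that $\A$ falls into the pattern ${\bf d}_1(\alpha,\beta,\gamma)$, whereupon Proposition \ref{Prop-2} selects the representative with $\beta=0$. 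When $\D\neq0$ (Case III), I would diagonalize $\D$ and split according to whether its eigenvalues are pairwise distinct: if so, the system \eqref{eqn-AD} is directly solvable for all seven components; in the remaining axisymmetric case $\D=D_{22}{\bf d}_0$, the first five components come from \eqref{eqn-AD}, while $A_{222},A_{223}$ are chased successively through the linear systems coming from $\B$, then $\uu$, then $\vv$, then $\cc$, again falling back on the ${\bf d}_1$ pattern plus Proposition \ref{Prop-2} when all of these degenerate at once.

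Third, once $\A$ is pinned down in the chosen basis, $\Pie$ is reconstructed from \eqref{piezo-decom}, which proves the claim. The main obstacle is not any single computation but the bookkeeping: one must check that the nested degenerate subcases are genuinely exhaustive, and that whenever the linear system for $(A_{222},A_{223})$ becomes singular, the quadratic identity $A_{2jk}A_{2jk}=B_{22}$ together with the residual $g(\theta)$-freedom in the $2$-$3$ plane leaves exactly the orbit freedom that Proposition \ref{Prop-2} absorbs. The one place where a genuine algebraic identity must be verified is in Case II with $\cc=\uu=\vv=0$: diagonalizing $\B$ forces its diagonal entries to satisfy both \eqref{z1} and the characteristic cubic \eqref{z2}, and a resultant computation shows the only possibilities are $I_2^2=3I_4$ or $I_2^2=2I_4$; this reduces to the two closed-form cases above and is routine.
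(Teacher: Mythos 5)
Your proposal follows essentially the same route as the paper's own argument: reduce via \eqref{piezo-decom} to recovering $\A$, then run the case analysis on the collinearity of $\cc,\uu,\vv$ and the vanishing of $\D$, using the linear systems $A_{ijk}c_k=K_{ij}$, $A_{ijk}u_k=F_{ij}$, $A_{ijk}v_k=G_{ij}$, $E_{ij}=A_{ik\ell}\epsilon_{jm\ell}D_{km}$ and the quadratic relations $B_{ij}=A_{ik\ell}A_{jk\ell}$, with Proposition \ref{Prop-2} absorbing the residual $g(\theta)$-freedom and the resultant computation closing the fully degenerate subcase. This matches the paper's proof in structure and in every key step, so it is correct as an outline.
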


\section{A polynomially irreducible functional basis of piezoelectric tensors}\label{Sec:Func2}

According to Theorem \ref{Thm:3Ddev}, we consider 9 intermediate tensors: four vectors $\uu,\vv,\ww,\cc$, four second order symmetric and traceless tensors $\D,\H,\F,\G$, and a second order symmetric tensor $\B$. Using the approach of Smith \cite{Sm-71} and Zheng \cite{Zh-93b}, we directly obtain a set of 393 hemitropic invariants which constitute a functional basis of these 9 intermediate tensors. Since the $\SO$-orbit of the piezoelectric tensor $\Pie$ is determined by these 9 intermediate tensors, the set of 393 hemitropic invariants also form a functional basis of the piezoelectric tensor.
Furthermore, because elements of 9 intermediate tensors are polynomials of 18 independent elements of the piezoelectric tensor, these 393 hemitropic invariants may polynomially reducible, i.e., some hemitropic invariants may be polynomials in the others. With computations by Mathematica, we eliminate all the hemitropic invariants that can be polynomially represented by the others; See the Supporting Material for details. Finally, we get a polynomially irreducible functional basis of piezoelectric tensors, which contains 260 hemitropic invariants. We conclude this result in the following theorem.

\begin{Theorem}
  A functional basis of piezoelectric tensors has 260 hemitropic invariants presented in Table \ref{HemiInv}. In addition, these 260 hemitropic invariants are polynomially irreducible.
\end{Theorem}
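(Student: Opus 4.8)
The plan is to establish the claimed functional basis in two stages, following the structure already laid out in the excerpt: first produce \emph{some} functional basis via the Smith--Zheng construction, then prune it to a polynomially irreducible one. For the first stage, I would invoke Theorem~\ref{Thm:3Ddev}: the $\SO$-orbit of $\Pie$ is determined by the nine intermediate tensors $\D,\B,\F,\G,\H,\uu,\vv,\ww,\cc$. Since these are four vectors, four second-order symmetric and traceless tensors, and one second-order symmetric tensor, all expressible as second-order symmetric tensors (a traceless tensor is in particular symmetric) together with vectors, I can feed them directly into the Smith--Zheng list \eqref{SmithBasis} with $M=5$ symmetric tensors ($\B,\D,\H,\F,\G$) and $P=4$ vectors ($\cc,\uu,\vv,\ww$). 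Counting the entries of \eqref{SmithBasis} with these values of $M$ and $P$ — the vector invariants ($P + \binom{P}{2} + \binom{P}{3}$), the $\A_\mu$-only invariants ($3M + 2\binom{M}{2}\cdot 2 + 2\binom{M}{2} + \binom{M}{3}$), and the mixed vector--tensor invariants — yields a total of $393$; I would present this count in a short table or inline tally. Because \eqref{SmithBasis} is a functional basis for any collection of symmetric tensors and vectors (Smith, Boehler, Zheng), and because these nine tensors determine the $\SO$-orbit of $\Pie$, equality of all $393$ invariants on $\Pie_1,\Pie_2$ forces the nine intermediate tensors to agree up to a common rotation, hence $\Pie_1 = g\ast\Pie_2$. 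This proves the $393$ invariants form a functional basis of piezoelectric tensors.

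For the second stage, I would argue that one may discard any invariant in a functional basis that is a polynomial function of the remaining invariants without destroying the separating property: if $I_k = Q(I_1,\dots,\widehat{I_k},\dots,I_{393})$ identically on $\SPie$, then agreement of the other invariants already forces agreement of $I_k$, so the smaller set still separates orbits. Iterating, one reaches a polynomially irreducible functional basis. The concrete content is then: (i) exhibit $260$ specific invariants from the $393$ (this is Table~\ref{HemiInv}); (ii) show the remaining $133$ are each polynomials in the $260$; (iii) show none of the $260$ is a polynomial in the other $259$. Steps (ii) and (iii) are the computational heart — they are carried out in Mathematica, with the polynomial-reduction identities and the irreducibility certificates relegated to the Supporting Material. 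For (ii) one substitutes the explicit polynomial expressions for the entries of $\D,\B,\F,\G,\H,\uu,\vv,\ww,\cc$ in terms of $A_{ijk},D_{ij},u_i,v_i$ (equivalently, in terms of the $18$ components $P_{ijk}$ via \eqref{piezo-decom}), and checks via Gr\"obner-basis elimination or direct polynomial algebra that each discarded invariant lies in the subring generated by the $260$. For (iii) one typically shows that at generic points the $260$ invariants are functionally independent enough — or more precisely, for each of the $260$ one produces two tensors agreeing on the other $259$ but differing on that one, or shows the invariant is not in the subalgebra generated by the rest by a degree/leading-term argument.

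The main obstacle is clearly the computational verification in steps (ii) and (iii): one is manipulating polynomials in $18$ variables, and the $393$ invariants have degrees up to fairly high order, so naive Gr\"obner-basis computations may be infeasible and one must exploit the block structure (the decomposition into $\A,\D,\uu,\vv$, and the fact that many invariants involve only a subset of the nine tensors) to make the elimination tractable. A secondary subtlety is (iii): "polynomially irreducible" as used here means no member is a \emph{polynomial} in the others, which is weaker than functional independence, so the certificates must be chosen to rule out polynomial — not merely rational or analytic — dependence; care is needed that the Mathematica pipeline actually tests the right notion. Finally, one should sanity-check the result on degenerate subfamilies (e.g.\ when $\A=0$, or when $\Pie$ reduces to a totally symmetric tensor, or to the Smith--Bao harmonic case), verifying that the restriction of the $260$ invariants specializes to a known functional basis of the smaller space; the excerpt announces exactly such consistency checks, and I would carry those out to corroborate the count.
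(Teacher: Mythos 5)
Your proposal matches the paper's own argument essentially step for step: invoke Theorem~\ref{Thm:3Ddev} to reduce to the nine intermediate tensors, apply the Smith--Zheng list \eqref{SmithBasis} with $M=5$ and $P=4$ to obtain the $393$-element functional basis, and then eliminate polynomially reducible invariants via Mathematica computations (deferred to the Supporting Material) to reach the $260$ in Table~\ref{HemiInv}, with the same consistency checks on special subfamilies. No substantive difference from the paper's proof.
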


\begin{table}
\caption{A polynomially irreducible functional basis of hemitropic invariants of piezoelectric tensors.}\label{HemiInv}
\centering
\begin{tabular}{|c|l|c|}
  \hline
  Degree & Invariants & Number \\
  \hline
 2 & $I_2:=A_{ijk}A_{ijk}$,\quad $\uu\cdot\uu$,\quad $\vv\cdot\vv$,\quad $\uu\cdot\vv$,\quad $\tr\Dd^2,$&5 \\
  \hline
   3 & $\uu\cdot\ww$,\quad $\vv\cdot\ww$,\quad $\tr \Dd^3$,\quad $\tr\D\B$,\quad $\uu\cdot\Dd\uu$,\quad $\vv\cdot\Dd\vv$,\quad $\uu\cdot\Dd\vv,$& 7 \\
  \hline
  4 & $I_4:=B_{ij}B_{ij}$,\quad $\ww\cdot\ww$,\quad $\uu\cdot\cc$,\quad $\vv\cdot\cc$\quad $[\uu,\vv,\ww]$,\quad $\tr \H^2$,\quad$\tr \F^2$,\quad $\tr \G^2$,  & 27 \\
  & $\tr \H\F$,\quad $\tr \H\G$, \quad$\tr \F\G$, \quad$\tr \Dd^2\H$,\quad$\tr \Dd^2\F$,\quad$\tr \Dd^2\G$,\quad$\uu\cdot\H\uu$,&\\
  &$\vv\cdot\H\vv$,\quad$\uu\cdot\F\uu$,\quad$\vv\cdot\F\vv$,\quad$\uu\cdot\G\uu$,\quad$\vv\cdot\G\vv$,\quad$\uu\cdot\D^2\uu,$\quad $\vv\cdot\D^2\vv,$& \\
 &$\uu\cdot\Epsilon[\D\H]$,\quad$\uu\cdot\Epsilon[\D\G]$,\quad$\vv\cdot\Epsilon[\D\H]$,\quad$[\uu,\vv,\D\uu],$\quad$[\uu,\vv,\D\vv],$&\\
  \hline
 5&$\ww\cdot\cc$,\quad$[\uu,\vv,\cc]$,\quad$\tr\D\H^2$,\quad$\tr\D\F^2$,\quad$\tr\D\G^2$,\quad$\tr\D\H\F$,\quad$\tr\D\H\G$,&35\\
  &$\tr\D\H\B$,\quad$\tr\D\F\G$,\quad$\tr\D\F\B$,\quad$\tr\D\G\B$,\quad$\ww\cdot\D\ww$,\quad$\uu\cdot\Epsilon[\B\H]$,& \\
  &$\uu\cdot\Epsilon[\H\G]$,\quad$\uu\cdot\Epsilon[\F\G]$,\quad$\vv\cdot\Epsilon[\B\H]$,\quad$\vv\cdot\Epsilon[\F\G]$,\quad$\ww\cdot\Epsilon[\D\F]$,&\\
  &$\ww\cdot\Epsilon[\D\G]$,\quad$\uu\cdot\Epsilon[\D^2\H]$,\quad$\uu\cdot\Epsilon[\D^2\F]$,\quad$\uu\cdot\Epsilon[\D^2\G]$,\quad$\vv\cdot\Epsilon[\D^2\H],$&\\
  &$\vv\cdot\Epsilon[\D^2\F],$\quad$\vv\cdot\Epsilon[\D^2\G],$\quad$\uu\cdot\F\ww$,\quad$\uu\cdot\G\ww$,\quad$\vv\cdot\G\ww$,\quad$[\uu,\vv,\H\uu]$,&\\
  &$[\uu,\vv,\F\uu]$,\quad$[\uu,\vv,\G\uu]$,\quad$[\uu,\ww,\D\uu]$,\quad$[\vv,\ww,\D\vv]$,\quad$[\uu,\vv,\H\vv]$,&\\
  &$[\uu,\vv,\G\vv]$,&\\
  \hline
  6&$I_6:=c_ic_i$,\quad$[\uu,\ww,\cc],\quad[\vv,\ww,\cc],\quad\tr\H^3,\quad\tr\F^3,\quad\tr\G^3,\quad\tr\H^2\F,$&65\\
  &$\tr\H^2\G,\quad\tr\H^2\B,\quad\tr\F^2\G,\quad\tr\H\F^2,\quad\tr\H\G^2,\quad\tr\H\B^2,\quad\tr\F\G^2,$&\\
  &$\tr\F\B^2,\quad\tr\G\B^2,\quad\tr\D^2\H^2,\quad\tr\D^2\F^2,\quad\tr\D^2\G^2,\quad\tr\H\F\G,$&\\
  &$\ww\cdot\B\ww,\quad\ww\cdot\H\ww,\quad\ww\cdot\F\ww,\quad\ww\cdot\G\ww,\quad\uu\cdot\H^2\uu,\quad\vv\cdot\H^2\vv,$&\\
  &$\uu\cdot\F^2\uu,\quad\vv\cdot\F^2\vv,\quad\vv\cdot\G^2\vv,\quad\uu\cdot\B^2\uu,\quad\vv\cdot\B^2\vv,\quad\ww\cdot\D^2\ww,$&\\
  &$[\uu,\D\uu,\D^2\uu],\quad[\vv,\D\vv,\D^2\vv],\quad\ww\cdot\Epsilon[\H\F],\quad\ww\cdot\Epsilon[\H\G],\quad\ww\cdot\Epsilon[\F\G],$&\\
  &$\cc\cdot\Epsilon[\D\F],\quad\cc\cdot\Epsilon[\D\G],\quad\ww\cdot\Epsilon[\D^2\B],\quad\ww\cdot\Epsilon[\D^2\F],\quad\ww\cdot\Epsilon[\D^2\G],$&\\
  &$\uu\cdot\Epsilon[\D\H^2],\quad\uu\cdot\Epsilon[\D\F^2],\quad\uu\cdot\Epsilon[\D\G^2],\quad\vv\cdot\Epsilon[\D\H^2],\quad\vv\cdot\Epsilon[\D\F^2],$&\\
  &$\vv\cdot\Epsilon[\D\G^2],\quad[\uu,\D\uu,\B\uu],\quad[\uu,\D\uu,\H\uu],\quad[\uu,\D\uu,\F\uu],\quad[\uu,\D\uu,\G\uu],$&\\
  &$[\vv,\D\vv,\F\vv],\quad[\vv,\D\vv,\G\vv],\quad\vv\cdot\F\cc,\quad[\uu,\ww,\B\uu],\quad[\uu,\ww,\G\uu],$&\\
  &$[\vv,\D\vv,\B\vv],\quad[\vv,\D\vv,\H\vv],\quad[\vv,\ww,\B\vv],\quad[\vv,\ww,\F\vv],\quad[\uu,\cc,\D\uu],$&\\
  &$[\vv,\cc,\D\vv],\quad[\uu,\ww,\D\ww],\quad[\vv,\ww,\D\ww],$&\\
  \hline
  7&$\cc\cdot\D\cc,\quad\cc\cdot\Epsilon[\F\G],\quad\uu\cdot\Epsilon[\B^2\H],\quad\uu\cdot\Epsilon[\B^2\F],\quad\uu\cdot\Epsilon[\B^2\G],$&54\\
  &$\uu\cdot\Epsilon[\H^2\F],\quad\uu\cdot\Epsilon[\H^2\G],\quad\uu\cdot\Epsilon[\F^2\G],\quad\vv\cdot\Epsilon[\B^2\H],\quad\vv\cdot\Epsilon[\B^2\G],$&\\
  &$\vv\cdot\Epsilon[\H^2\F],\quad\vv\cdot\Epsilon[\H^2\G],\quad\vv\cdot\Epsilon[\F^2\G],\quad\cc\cdot\Epsilon[\D^2\B],\quad\cc\cdot\Epsilon[\D^2\H],$&\\
  &$\cc\cdot\Epsilon[\D^2\F],\quad\cc\cdot\Epsilon[\D^2\G],\quad\uu\cdot\Epsilon[\B\H^2],\quad\uu\cdot\Epsilon[\B\F^2],\quad\uu\cdot\Epsilon[\B\G^2],$&\\
  &$\uu\cdot\Epsilon[\H\G^2],\quad\vv\cdot\Epsilon[\B\H^2],\quad\vv\cdot\Epsilon[\B\F^2],\quad\vv\cdot\Epsilon[\B\G^2],\quad\vv\cdot\Epsilon[\H\F^2],$&\\
  &$\vv\cdot\Epsilon[\F\G^2],\quad\ww\cdot\Epsilon[\D\B^2],\quad\ww\cdot\Epsilon[\D\H^2],\quad\ww\cdot\Epsilon[\D\F^2],\quad\ww\cdot\Epsilon[\D\G^2],$&\\
  &$[\uu,\B\uu,\H\uu],\quad[\uu,\B\uu,\F\uu],\quad[\uu,\B\uu,\G\uu],\quad[\uu,\H\uu,\F\uu],$&\\
  &$[\uu,\H\uu,\G\uu],\quad[\uu,\F\uu,\G\uu],\quad[\vv,\B\vv,\H\vv],\quad[\vv,\B\vv,\F\vv],$&\\
  &$[\vv,\B\vv,\G\vv],\quad[\vv,\H\vv,\F\vv],\quad[\vv,\H\vv,\G\vv],\quad[\vv,\F\vv,\G\vv],\quad\ww\cdot\F\cc,$&\\
  &$\ww\cdot\G\cc,\quad[\uu,\cc,\H\uu],\quad[\vv,\cc,\H\vv],\quad[\uu,\ww,\B\ww],\quad[\uu,\ww,\H\ww],$&\\
  &$[\uu,\ww,\F\ww],\quad[\uu,\ww,\G\ww],\quad[\vv,\ww,\B\ww],\quad[\vv,\ww,\H\ww],\quad[\vv,\ww,\F\ww],$&\\
  &$[\vv,\ww,\G\ww],$&\\
\end{tabular}
\end{table}

\begin{table}[t]
\centering
\begin{tabular}{|c|l|c|}
  \hline
  Degree & Invariants & Number \\
  \hline
  8&$\tr\H^2\F^2,\quad\tr\H^2\G^2,\quad\tr\H^2\B^2,\quad\cc\cdot\H\cc,\quad\cc\cdot\F\cc,\quad\cc\cdot\G\cc,$&23\\
  &$\cc\cdot\D^2\cc,\quad\ww\cdot\H^2\ww,\quad\ww\cdot\Epsilon[\B^2\F],\quad\ww\cdot\Epsilon[\B^2\G],\quad\ww\cdot\Epsilon[\H^2\F],$&\\
  &$\ww\cdot\Epsilon[\H^2\G],\quad\ww\cdot\Epsilon[\F^2\G],\quad\ww\cdot\Epsilon[\B\H^2],\quad\ww\cdot\Epsilon[\B\F^2],\quad\ww\cdot\Epsilon[\B\G^2],$&\\
  &$\ww\cdot\Epsilon[\F\G^2],\quad\cc\cdot\Epsilon[\D\H^2],\quad\cc\cdot\Epsilon[\D\F^2],\quad\cc\cdot\Epsilon[\D\G^2],\quad[\ww,\cc,\D\ww]$&\\
  &$[\uu,\cc,\D\cc],\quad[\vv,\cc,\D\cc],$&\\
  \hline
  9&$[\uu,\B\uu,\B^2\uu],\quad[\uu,\F\uu,\F^2\uu],\quad[\uu,\G\uu,\G^2\uu],\quad[\vv,\B\vv,\B^2\vv],\quad$&23\\
  &$[\vv,\G\vv,\G^2\vv],\quad[\ww,\D\ww,\D^2\ww],\quad\cc\cdot\Epsilon[\B^2\F],\quad\cc\cdot\Epsilon[\B^2\G],$&\\
  &$\cc\cdot\Epsilon[\H^2\F],\quad\cc\cdot\Epsilon[\H^2\G],\quad\cc\cdot\Epsilon[\B\H^2],\quad\cc\cdot\Epsilon[\B\F^2],\quad\cc\cdot\Epsilon[\B\G^2],$&\\
  &$[\ww,\D\ww,\B\ww],\quad[\ww,\D\ww,\H\ww],\quad[\ww,\D\ww,\F\ww],\quad[\ww,\D\ww,\G\ww],$&\\
  &$[\ww,\cc,\B\ww],\quad[\ww,\cc,\H\ww],\quad[\ww,\cc,\F\ww],\quad[\ww,\cc,\G\ww],$&\\
  &$[\uu,\cc,\G\cc],\quad[\vv,\cc,\F\cc],$&\\
  \hline
 10&$I_{10}:=A_{ijk}c_ic_jc_k,\quad[\ww,\B\ww,\H\ww],\quad[\ww,\B\ww,\F\ww],\quad[\ww,\B\ww,\G\ww],$&10\\
 &$[\ww,\H\ww,\F\ww],\quad[\ww,\H\ww,\G\ww],\quad[\ww,\F\ww,\G\ww],\quad[\ww,\cc,\B\cc],$&\\
 &$[\ww,\cc,\F\cc],\quad[\ww,\cc,\G\cc],\quad$&\\
 \hline
 12&$[\ww,\B\ww,\B^2\ww],\quad[\cc,\D\ww,\B\cc],\quad[\cc,\D\ww,\H\cc],\quad[\cc,\D\cc,\F\cc],$&5\\
 &$[\cc,\D\cc,\G\cc],$&\\
 \hline
 13&[\cc,\B\cc,\H\cc],\quad[\cc,\B\cc,\F\cc],\quad[\cc,\B\cc,\G\cc],\quad[\cc,\H\cc,\F\cc],&5\\
 &$[\cc,\H\cc,\G\cc],$&\\
 \hline
 15&$[\cc,\B\cc,\B^2\cc]$.&1\\
 \hline
 Total&&260\\
 \hline
\end{tabular}
\end{table}

In the remainder of this section, we compare our result with some existing  works.
First, we consider a special case that the piezoelectric tensor is a third order symmetric and traceless tensor, i.e., $\D=0$, $\uu=\vv=0$ and hence $\Pie=\A\in\SH{3}$. According to Olive and Auffray \cite{OA-14}, a set of five hemitropic invariants with degrees two, four, six, ten, and fifteen forms a minimal integrity basis of third order symmetric and traceless tensors.
For our result, there are only five nonzero hemitropic invariants from Table \ref{HemiInv} in this case:
\begin{equation}\label{dev3-inb}
  I_2,\quad I_4,\quad I_6,\quad I_{10}, \quad\text{ and }\quad   [\cc,\B\cc,\B^2\cc].
\end{equation}
Clearly, degrees of these five polynomially irreducible hemitropic invariants are two, four, six, ten, and fifteen, respectively.
Moreover, the set of five hemitropic invariants in \eqref{dev3-inb} and the one of Olive and Auffray \cite{OA-14} are equivalent.

Second, a third order symmetric tensor is also a special piezoelectric tensor. Also in \cite{OA-14}, there are 27 hemitropic invariants consisting of a minimal integrity basis of third order symmetric tensors. For convenience, we use the same symbols as \cite{OA-14} and list all these 27 hemitropic invariants $\{i_{2},\ j_{2},\dots,i_{15}\}$ of different degrees in Table \ref{HemiInv2}.

\begin{table}[t]
  \centering
  \caption{A minimal integrity basis of third order symmetric tensors has 27 hemitropic invariants.}\label{HemiInv2}
  \begin{tabular}{|c|l|c|l|}
    \hline
   Degree &Hemitropic Invariants  & Degree&  Hemitropic Invariants \\
\hline
2&$i_2,~~j_2,$&9&$i_9,~~j_9,~~k_9,~~l_9,~~m_9,~~n_9,~~o_9,$\\
\hline
4&$i_4,~~j_4,~~k_4,~~l_4,$&10&$i_{10},$\\
\hline
6&$i_6,~~j_6,~~k_6,~~l_6,~~m_6,$&11&$i_{11},~~j_{11},$\\
\hline
7&$i_7,~~j_7,~~k_7,$&13&$i_{13},$\\
\hline
8&$i_8,$&15&$i_{15}.$\\
    \hline
  \end{tabular}
\end{table}

In this case, we have $\D={\bf 0}$, $\vv=0$, and
$$P_{ijk}=A_{ijk}+\frac{1}{5}(u_i\delta_{jk}+u_j\delta_{ik}+u_k\delta_{ij})\in \mathbb{T}_{(ijk)}.$$
Thus, $\Pie$ only refers to $\B,\cc,\uu,$ and $\F$. From Table \ref{HemiInv}, there are exactly 20 nonzero hemitropic invariants in Table \ref{HemiInv3}, which form a functional basis of third order symmetric tensors. So we have 7 less hemitropic invariants than Olive and Auffray's basis.

\begin{table}
  \centering
  \caption{A functional basis of third order symmetric tensors contains 20 hemitropic invariants.}\label{HemiInv3}
  \begin{tabular}{|c|l|c|}
    \hline
   Degree & Hemitropic Invariants &Number  \\
\hline
2&$I_2,~~\uu\cdot\uu,$&2\\
\hline
4&$I_4,~~\uu\cdot\vt{c},~~\mathrm{tr}\F^2,~~\uu\cdot\F\uu,$&4\\
\hline
6&$I_6,\ \mathrm{tr}\F^3,\ \mathrm{tr}\F\B^2,\ \uu\cdot\F^2\uu,\ \uu\cdot\B^2\uu,$ &5\\
\hline
7&$\uu\cdot\Epsilon[\B\F^2],~~[\uu,\B\uu,\F\uu],$&2\\
\hline
8&$\vt{c}\cdot\F\vt{c},$&1\\
\hline
9&$[\uu,\B\uu,\B^2\uu],~~[\uu,\F\uu,\F^2\uu],~~\cc\cdot\Epsilon[\B^2\F],~~\cc\cdot\Epsilon[\B\F^2],$&4\\
\hline
10&$I_{10},$&1\\
\hline
15&$[\cc,\B\cc,\B^2\cc].$&1\\
    \hline
  \end{tabular}
\end{table}

On one hand, since $\{i_{2},\ j_{2},\dots,i_{15}\}$ forms a minimal integrity basis, the above 20 hemitropic invariants $\{I_2, \dots,[\cc,\B\cc,\B^2\cc]\}$ can absolutely be polynomial represented by $\{i_{2},\ j_{2},\dots,i_{15}\}$. On the other hand, because we have seven less hemitropic invariants, there exist some invariants in $\{i_{2},\ j_{2},\dots,i_{15}\}$ that are not polynomials in $I_2, \dots,[\cc,\B\cc,\B^2\cc]$. It means that our functional basis is not a subset of Olive's minimal integrity basis. With further calculations, we find all the possible polynomial representations of $\{i_{2},\ j_{2},\dots,i_{15}\}$ as Table \ref{PolyR}. Here, ``$a\sim b\ \oplus\ c$" means $a$ can be linear represented by $b$ and $c$. We notice that $i_7,\ k_9,\ l_9,\ m_9,\ n_9,\ i_{11},\ j_{11},\ i_{13}$ can not be polynomial represented by $\{I_2, \dots,[\cc,\B\cc,\B^2\cc]\}$, which agrees with our inference.

\begin{table}
\centering
\caption{Polynomial relations.}\label{PolyR}
\begin{tabular}{|c|l|}
\hline
Degree&Polynomial Relations\\
\hline
2&$i_{2}\sim(\mathrm{tr}\B),\ j_2\sim(\uu\cdot\uu),$\\
\hline
4&$i_4\sim(\mathrm{tr}\B)^2\oplus(\mathrm{tr}\B^2),\quad j_4\sim(\uu\cdot\vt{c}),\quad k_4\sim(\mathrm{tr}\F^2)\oplus((\mathrm{tr}\B)(\uu\cdot\vt{c})),$\\
&$l_4\sim(\uu\cdot\F\uu),\quad i_6\sim(\vt{c}\cdot\vt{c}),\quad j_6\sim(\mathrm{tr}\F\B^2)\oplus((\mathrm{tr}\B)(\uu\cdot\vt{c}))$,\\
\hline
6&$k_6\sim((\mathrm{tr}\B)(\mathrm{tr}\F^2))\oplus((\mathrm{tr}\B)^2(\uu\cdot\vt{c}))\oplus((\mathrm{tr}\B)^2(\uu\cdot\vt{c}))\oplus(\uu\cdot\B^2\uu),$\\
&$l_6\sim((\mathrm{tr}\B)(\uu\cdot\F\uu))\oplus((\uu\cdot\uu)(\uu\cdot\vt{c}))
\oplus\mathrm{tr}\F^3,$\\
&$m_6\sim((\mathrm{tr}\B)(\uu\cdot\uu))\oplus((\uu\cdot\uu)(\mathrm{tr}\F^2))
\oplus(\uu\cdot\F^2\uu)$,\\
\hline
7
&$j_7\sim(\uu\cdot\Epsilon[\B\F^2]),\quad k_7\sim([\uu,\B\uu,\F\uu]),$\\
\hline
8&$i_8\sim(\vt{c}\cdot\F\vt{c}),$\\
\hline
9&
$i_9\sim(\cc\cdot\Epsilon[\B^2\F]),\quad j_9\sim(\cc\cdot\Epsilon[\B\F^2]),\quad o_9\sim[(\uu\cdot\uu)([\uu,\B\uu,\F\uu])]\oplus([\uu,\F\uu,\F^2\uu]),$\\
\hline
10&$i_{10}\sim(A_{ijk}c_ic_jc_k),$\\
\hline
15&$i_{15}\sim([\cc,\B\cc,\B^2\cc]).$\\
\hline
\end{tabular}
\end{table}

\begin{table}[t]
\caption{Numbers of hemitropic invariants in different degrees.}\label{Num}
\centering
\begin{tabular}{|c|c|c||c|c|c|}
\hline
Degree & FB & MIB  &Degree & FB & MIB  \\
\hline
2 & 5 & 5&9& 23 & 55 \\\hline
3 & 7& 7&10& 10 & 14\\\hline
4 & 27& 28&11& 0 & 10\\\hline
5 &35 & 45& 12&5 & 6\\\hline
6 & 65& 105&13& 5 & 1\\\hline
7 &54 & 126& 14&0 & 1\\\hline
8 &23 & 91& 15&1 & 1\\
\hline
Total & 260 & 495 & & & \\\hline
\end{tabular}
\end{table}

Finally, we list in Table \ref{Num} numbers of hemitropic invariants in different degrees of our polynomially irreducible functional basis (FB) and Olive's minimal integrity basis (MIB). Compared with Olive's result, the number of hemitropic invariants in the new functional basis is nearly a half of Olive's one.

\section{Final remarks}\label{Sec:FinRemk}

A polynomially irreducible functional basis of 260 hemitropic invariants of piezoelectric tensors has been constructed in this paper. There are 125 odd order hemitropic invariants and 135 even order hemitropic invariants in the new functional basis. We note that this polynomially irreducible functional basis of piezoelectric tensors are not necessary a minimal functional basis.

At last, we claim that functional bases of piezoelectric tensors are more complex than that of elasticity tensors. On one hand, hemitropic invariants of a piezoelectric tensor are not necessary isotropic, since the piezoelectric tensor is of odd order. Nevertheless, hemitropic invariants and isotropic invariants of even order elasticity tensors are equivalent. So we focus on hemitropic invariants that forms a functional basis of piezoelectric tensors in this paper.

On the other hand, by the orthogonal irreducible decomposition of tensors, the elasticity tensor is factorized into a fourth order symmetric and traceless tensor, two second order symmetric and traceless tensors, and two scalars. Beside two scalars which are invariants naturally, we only need to consider invariants and joint invariants of three symmetric and traceless tensors for functional bases of elasticity tensors. However, according to \eqref{decom-piez-space}, we must study invariants and joint invariants of four symmetric and traceless tensors for functional bases of piezoelectric tensors.
In this sense, the polynomially irreducible functional basis of 260 hemitropic invariants of piezoelectric tensors is also significant for the theory of representations for tensor functions.

\section*{Acknowledgments}

The authors are grateful to Professor Quanshui Zheng from Tsinghua University for his valuable comments.

\end{document}